\def\argmax{\mathop{\rm arg\,max}}
\newtheorem{lemma}{Lemma}
\newtheorem{proposition}{Proposition}
\newtheorem{remark}{Remark}
\def\bff{{\bf f}}
\def\bg{{\bf g}}
\def\bh{{\bf h}}
\def\bs{{\bf s}}
\def\bv{{\bf v}}
\def\bw{{\bf w}}
\def\bx{{\bf x}}
\def\bz{{\bf z}}
\def\bA{{\bf A}}
\def\bB{{\bf B}}
\def\bC{{\bf C}}
\def\bD{{\bf D}}
\def\bF{{\bf F}}
\def\bH{{\bf H}}
\def\bI{{\bf I}}
\def\bR{{\bf R}}
\def\bW{{\bf W}}
\def\bbE{\mbox{$\mathbb{E}$}}
\begin{document}


\title{ Joint Precoding and Artificial Noise Design \\ for MU-MIMO Wiretap Channels}

\author{
Eunsung~Choi, Mintaek Oh, Jinseok Choi,\\ Jeonghun~Park, Namyoon Lee, and Naofal Al-Dhahir
\thanks{E. Choi, M. Oh, and J. Choi are with the Department of Electronical Engineering, Ulsan National Institute of Science and Technology (UNIST), Ulsan, South Korea (e-mail: {\texttt{\{eunsungchoi,ohmin,jinseokchoi\}@unist.ac.kr}}). 



N. Lee is with the School of Electrical Engineering, Korea University, Seoul, South Korea (e-mail: {\texttt{namyoon@korea.ac.kr}}). 

J. Park is with the School of Electronics Engineering, College of IT Engineering, Kyungpook National University, Daegu, South Korea (e-mail: {\texttt{jeonghun.park@knu.ac.kr}}). 

N. Al-Dhahir is with the Electrical and Computer Engineering Department, The University of Texas at Dallas, TX, USA. 
(e-mail: {\texttt{aldhahir@utdallas.edu}}).
}
}

\maketitle
\setcounter{page}{1} 

\begin{abstract} 

Secure precoding superimposed  with artificial noise (AN) is a promising transmission technique to improve security by harnessing the superposition nature of the wireless medium. However, finding a jointly optimal precoding and AN structure is very challenging in downlink multi-user multiple-input  multiple-output (MU-MIMO) wiretap channels with multiple eavesdroppers. The major challenge in maximizing the secrecy rate arises from the non-convexity and non-smoothness of the rate function. Traditionally, an alternating optimization framework that identifies beamforming vectors and AN covariance matrix has been adopted; yet this alternating approach has  limitations in maximizing the secrecy rate. In this paper, we put forth a novel secure precoding algorithm that jointly and simultaneously optimizes the beams and AN covariance matrix for maximizing the secrecy rate when a transmitter has either perfect or partial channel knowledge of eavesdroppers. To this end, we first establish an approximate secrecy rate in a smooth function. Then, we derive the first-order optimality condition in the form of the nonlinear eigenvalue problem (NEP). We present a computationally efficient algorithm to identify the principal eigenvector of the NEP as a suboptimal solution for secure precoding. Simulations demonstrate that the proposed methods improve secrecy rate significantly compared to the existing secure precoding methods.

\begin{IEEEkeywords}
   Physical layer security, secrecy rate, secure precoding, artificial noise (AN), and joint and simultaneous optimization.
\end{IEEEkeywords}
\end{abstract}

\section{Introduction}


As the amount of information delivered via wireless media grows rapidly, security in wireless communications has become an increasingly critical issue to protect confidential information from eavesdroppers.
Cryptography has been used in traditional approaches to protect confidential information from eavesdropping  \cite{massey:proc:88}.
Unfortunately, the key distribution and complicated encryption algorithms necessitate high implementation costs.
Accordingly, systems with low-computing capability, for instance, internet-of-things (IoT) communication systems and sensor networks, are not capable of using the cryptography approach to prevent eavesdropping the signals transmitted by an access point (AP).

Compared to the cryptography approach, physical layer security  has been flourishing due to its low computational complexity \cite{wyner:bell:75}.
It has been demonstrated in \cite{gopala:tit:08, liang:tit:08, khisti:tit:08, tekin:tit:08} that a transmitter can reliably deliver confidential information to users with a positive rate while guaranteeing that an eavesdropper cannot decode the information if the wiretap channel quality is worse than the legitimate channel.
Then, {\emph{the secrecy rate}} is defined as this non-zero transmission rate.
The importance of security has become more  significant as the amount of transmitted information is rapidly increasing.
Consequently, there are many works that apply physical layer security for the secrecy rate maximization in 6G applications such as IoT systems  \cite{jaiswal2020secrecy, xu2016security, haider2020optimization}.
In physical layer security, there are two representative approaches which solve the secrecy rate maximization problem in multiple-input multiple-output (MIMO) systems: $(i)$ secure precoding and $(ii)$ artificial noise (AN) injection.
In this paper, we investigate a secure precoding with AN design for physical layer security of downlink IoT networks and propose a joint optimization framework.

\subsection{Prior Work}




Starting with Wyner's pioneering work \cite{wyner:bell:75}, physical layer security has been studied in many scenarios including multiple access channels \cite{liang:tit:08} and broadcast channels \cite{khisti:tit:08}. 
In particular, there have been extensive efforts to evaluate the secrecy rate of multiple antenna systems employing an information-theoretic approach.
The secrecy rate was derived in \cite{parada:isit:05} for single-input multiple-output (SIMO) channels in a slow fading environment.
A single-antenna eavesdropper \cite{shafiee:isit:07} and a multiple antennas eavesdropper \cite{khisti:tit:10_part1} were also investigated to analyze the achievable secrecy rate in multiple-input single-output (MISO) channels.
Furthermore, the secrecy rate was derived under a multiple-input multiple-output (MIMO) assumption in \cite{khisti:iti:10, oggier:tit:11}.
Specifically, the generalized singular-value decomposition (GSVD) was proposed in \cite{khisti:iti:10} to achieve the MIMO channels' secrecy rate in the presence of a multi-antenna eavesdropper.
Secure two-user MIMO broadcast channels were considered in \cite{liu:tit:09, liu:tit:10}, and secret dirty-paper coding (S-DPC) was proposed to achieve the secrecy rate region.
In \cite{fakoorian:jsac:13}, it was also revealed that linear precoding can attain the same secrecy rate region as S-DPC. 
In \cite{fakoorian:tifs:11}, the achievable secrecy rates and linear precoding design in MIMO interference channels were investigated.
In \cite{kampeas:isit:16, wang:tifs:17}, the secrecy rate was investigated when multiple users and multiple eavesdroppers coexist.

There have been several prior works that proposed secure precoding method to maximize the secrecy rate.
In \cite{li:tvt:18}, a secure beamforming approach was proposed to minimize the signal-to-interference plus noise (SINR) of eavesdroppers in a multiuser MIMO (MU-MIMO) system with a single eavesdropper overhearing information of a particular user. 
Assuming MIMO multi-eavesdropper (MIMOME) channels, robust beamforming methods were proposed in \cite{mukherjee:tsp:11, 9312431, choi:wcl:20} where the channel state information (CSI) is imperfect.
In \cite{zhao:tifs:15}, a successive convex approximation technique was introduced to relax the sum secrecy rate maximization problem.
For wireless systems with multiple transmitter-receiver pairs and one eavesdropper, algorithms were proposed in \cite{sheng:tsp:18, sheng:spl:18} to maximize the secrecy rate and the secrecy energy efficiency.
In \cite{yang:tcom:13, yang:tcom:16}, a secure antenna selection method was investigated to improve the secrecy performance without increasing computational complexity.
Friendly jamming methods were studied in \cite{yang:twc:13, ma:tcom:18, gao:tcom:19} where the transmitter cannot estimate the CSI of the eavesdroppers.
To consider the secrecy rate and the secrecy energy efficiency together, massive MIMO systems \cite{nguyen:jsac:18, yang:tifs:19}, low-resolution digital-to-analog converter systems \cite{xu:tcom:19, xu:access:19}, and simultaneous wireless information and power transfer  systems \cite{nasir:twc:17, alageli:jsac:19, sun:wcl:19, tang:access:18} have been investigated.
A comprehensive examination of physical layer security in multiuser systems was performed in \cite{mukherjee:surv:14}.
In \cite{zheng:twc:17}, the hybrid full- and half-duplex receiver deployment strategy was studied in a wireless ad hoc network with numerous legitimate transmitter-receiver pairs and eavesdroppers.

The AN injection scheme is another representative approach for achieving secure communications in MIMO systems.
In particular, AN design based on the null-space of the legitimate channel matrix is a well-known methodology for maximizing the sum secrecy rate \cite{1558439,7328729}.
In \cite{yu:access:18}, assuming that there is a single user and a single eavesdropper, the  design of secure precoding and AN covariance  was presented to maximize the secrecy rate.
In \cite{9201173}, an AN-aided secrecy rate maximization method was presented for the system with an intelligent reflecting surface. 
In \cite{6482662}, the secure precoder for the MISO scenario was found by solving the AN-aided secrecy rate maximization problem using the convex optimization toolbox, CVX. 
In \cite{mei2017artificial}, AN-aided transmit design for multi-user MISO systems was proposed to solve the secrecy rate region maximization (SRRM) problem.
For the MIMO scenario, optimal transmit power allocation for precoder and AN was considered in \cite{6572838}.
The AN-aided scheme was also proposed in \cite{nguyen:jsac:18} where AN is injected into the downlink training signals to block the eavesdropper from obtaining CSIT of the correct wiretap link in the presence of a single-user and multiple eavesdroppers.
The AN-aided secure beamforming approach for data transmission was investigated also for the single-user and multi-eavesdropper model in  \cite{5643181} and for the multi-group and multi-cast MU-MIMO model in \cite{8319269}. 
Furthermore, the power ratio between the secure precoder and AN covariance matrix was optimized by using the alternating optimization approach.
In \cite{nguyen:jsac:18,7605477}, the secrecy rate maximization method was investigated with the secure precoder and fixed null-space AN by alternation-based power optimization.
The power optimization between secure precoder and fixed null-space AN was performed by a one-dimensional line search for maximizing the secrecy rate in \cite{6482662}.
In \cite{5306434}, the proposed non-adaptive power allocation method was performed between a fixed zero-forcing (ZF) precoder and a fixed null-space AN. 
In \cite{qin2011optimal}, AN-aided secure communication design with the fixed ZF precoding method was researched by the proposed two-level optimization method.

Although physical layer security has been widely investigated, the existing secure precoding and AN injection approaches have the limited applicability for general downlink broadcasting wiretap channels as follows: $(i)$ only a single legitimate user is assumed in the wiretap channel \cite{yu:access:18,9201173,6572838,6482662,5643181},
$(ii)$ only a single eavesdropper presents to overhear confidential messages \cite{mukherjee:tsp:11, li:tvt:18, xu:access:19, yang:tifs:19, xu:tcom:19, yu:access:18,9201173,nguyen:jsac:18,6572838},
$(iii)$ there are fixed pairs of an eavesdropper and legitimate user, i.e., each user has one designated eavesdropper \cite{zhao:tifs:15}, which is not a general wiretap channel where each eavesdropper is considered to overhear any legitimate user's message, 
$(iv)$ existing algorithms mostly require high computational complexity as in \cite{nguyen:jsac:18,alageli:jsac:19,6482662,8319269,mei2017artificial},
or $(v)$ optimization of precoding and AN  covariance is performed in an alternating manner \cite{nguyen:jsac:18,6482662,7605477,5306434,qin2011optimal}, which is highly sub-optimal.
Such limitations for physical layer security comes from the complicated nature of the optimization problem for AN-aided secure precoding; the sum rate maximization problem is a famous NP-hard problem even if there are only users in the network \cite{choi:arxiv:19} and thus, solving a secure precoding problem for wiretap channels is more challenging.
Furthermore, for the multiple eavesdropper case, the amount of information leakage is determined by the maximum rate of each wiretap channel \cite{kampeas:isit:16} which causes the non-smoothness of the optimization problem.
Finally, the design philosophies of the precoder and AN covariance are disparate as the AN does not convey any information to legitimate users, which leads to alternation-based optimization for most AN-aided secure precoding methods.
In this regard, it is necessary to develop a joint and simultaneous optimization framework of secure precoding and AN covariance design for general downlink broadcasting wiretap channels where multiple users and eavesdroppers coexist.
Overcoming the forementioned challenges, we propose a joint optimization framework without alternation for the system, i.e., joint and simultaneous optimization.

\subsection{Contributions}

In this paper, we propose a joint optimization framework for optimizing the secure precoder and AN covariance simultaneously in downlink systems where multiple legitimate users and multiple eavesdroppers coexist. 
In the considered system, a multi-antenna AP transmits confidential information symbols via linear precoding to  $K$ single-antenna users.
In addition, coexisting $M$ single-antenna eavesdroppers attempt to overhear the confidential information signals sent from the AP.
In this setup, the AP employs both precoding and AN  to maximize the sum secrecy rate.
Here, we summarize our contributions.

\begin{itemize}
\item We adopt a secrecy rate as our key performance measure.
Using the secrecy rate, we formulate a sum secrecy rate maximization problem  to jointly optimize $(i)$  precoding matrix and $(ii)$ an AN covariance matrix.
There are critical difficulties in solving the formulated problem.
First, the problem is non-convex and hence, finding the globally optimal solution is infeasible.
Second, the sum secrecy rate is non-smooth since the secrecy rate in the presence of multiple eavesdroppers is determined by the maximum wiretap channel rate of the eavesdroppers.
Finally, the precoder and AN are coupled in the sum secrecy rate term with different design directions, which hinders the joint optimization, leading to an alternating optimization approach eventually. 

\item To address these challenges, we first approximate the non-smooth objective function by using a smooth maximum approximation.
Then, concatenating the precoding vectors and AN covariance vectors, the original problem is reformulated into a tractable non-convex form which is a product of Rayleigh quotients.
A significant advantage of this reformulation is that the first-order Karush-Kuhn-Tucker (KKT) condition of the reformulated problem can be cast as a generalized eigenvalue problem.
In particular, the corresponding eigenvalue is equivalent to the objective function of the problem and the eigenvector is the normalized concatenated vector.
Consequently, finding the leading eigenvector of the generalized eigenvalue problem is equivalent to finding a stationary point that maximizes the objective function, i.e., the best local optimum.
\item We propose an algorithm to efficiently find the leading eigenvector based on generalized power iteration (GPI) \cite{choi:arxiv:19}.  
Accordingly, the proposed GPI-based optimization method jointly optimizes secure precoding and AN covariance without alternation. 
In other words, we obtain  secure precoding vectors, AN covariance matrix, and the power ratio between the precoder and AN simultaneously from the GPI-based method.
In addition, thanks to the block-diagonal structure of the matrices of the generalized eigenvalue problem, the algorithm complexity can achieve low complexity.
\item Considering the imperfect CSIT for wiretap channels, we further develop a channel covariance-based optimization method by exploiting a similar joint optimization framework to the perfect CSIT case.
Since designing the AN covariance as the null-space of user channels, we also adopt the null-space AN design approach to the proposed framework and propose a null-space AN-based secure precoding method.


\item Through simulations, we validate the secrecy performance of the proposed joint and simultaneous optimization method.
We demonstrate that the proposed method improves secrecy performance significantly in a various system setups compared to the other baseline methods thanks to the joint optimization without alternation.
\end{itemize}

\begin{table}[!t]\caption{A List of Key Notations }
\begin{center} 
    \begin{tabular}{ c | p{6 cm} || c|  p{6 cm} } 
    \toprule
    $N$ & number of AP antennas &  $(\cdot)^{\sf T}$ & matrix transpose\\
    $K$ & number of users & $(\cdot)^{\sf H}$ & matrix Hermitian \\
    $M$ &   number of eavesdroppers  &  $(\cdot)^{*}$ & complex conjugate \\
    $P$ &  transmit power constraint &  $(\cdot)^{-1}$ & matrix inversion \\
    $\bF$ & precoding matrix &${\sf{tr}}(\cdot)$ & trace operator \\
    $P\bf \Phi\bf \Phi^{\sf H}$ &  AN covariance matrix & 
    ${\rm vec}(\cdot)$ & vectorization operator\\
    $\bh_k$ &  legitimate channel vector from $k$-th user & $\| \cdot \|$ &  $\ell_2$ norm \\
    $\bg_m$ &  wiretap channel vector of $m$-th eavesdropper &  $\left[\cdot\right]^+$ &  $\max\left[\cdot, 0\right]$ \\
    $R_k$ & rate of user $k$  &  ${\bf{I}}_N$ & $N \times N$ identity matrix  \\ 
    $R_{m,k}^{\sf e}$ & rate of eavesdropper $m$ for $k$-th symbol & $\bf 0$ & zero vector with proper dimension\\ 
   \bottomrule
    \end{tabular}
    \vspace{-2em}
\end{center}
\end{table}

\section{System Model} \label{sec:sys_model}

In this section, we describe the considered  downlink  communications system.
We focus on the scenario of multiple users in which an AP with $N$ antennas supports $K$ single-antenna users.
In addition, $M$ single-antenna eavesdroppers coexist, attempting to overhear the private messages of the users. 
We let $\CMcal{K}$ and $\CMcal{M}$ denote the set of users and eavesdroppers, respectively.
The AP precodes user symbols $s_k$, $k=1,...,K$, with a linear precoder $\bff_k$ and transmits the precoded signals along with an AN to enhance communication security as
\begin{align} \label{eq:x_org}
    \bx = \bF\bs+{\bf \Phi} \bz,
\end{align}
where ${\bF}=[\bff_1,...,\bff_K]\in\mathbb{C}^{N \times K}$ is the precoding matrix, $\bs=[s_1,...,s_K]^{\sf T}$ is the vector of user symbols with $\bs\sim\mathcal{CN}(\bf {0}_{\textit{N} \times \textit{1}}, \textit{P} \bI_{\textit {N}})$, and ${\bf \Phi}\bz$ is the AN vector which follows ${\bf \Phi}\bz\sim\mathcal{CN}(\bf{0}, \textit{P} {\bf \Phi} {\bf \Phi}^{\sf H})$.
Specifically, we utilize the $J$ number of columns to design AN covariance matrix, i.e., $\bf \Phi$ is a matrix of size $N\times J$.
Accordingly, the total transmit power constraint is given by 
\begin{align}
    P \sum_{i=1}^{K} \| {\bf{f}}_i \|^2 + P \sum_{j=1}^{J} \| {\pmb {\phi}}_j \|^2    \leq P
\end{align} 
where ${\pmb {\phi}}_j$ is the $j$-th column of $\bf \Phi$ and $P$ is maximum transmit power constraint.

The legitimate channel vector from $k$-th user to the the AP is denoted by ${\bf{h}}_{k} \in \mathbb{C}^{N}$ for $k\in \CMcal{K}$.
The spatial covariance matrix ${\bf{R}}_k \in \mathbb{C}^{N \times N}$ of the channel vector ${\bf{h}}_k$ is defined as ${\bf{R}}_{k} = \mathbb{E}\left[{\bf{h}}_{k}  {\bf{h}}_{k}^{\sf H} \right] $.
Similarly, the wiretap channel vector from the $m$-th eavesdropper to the AP is denoted by ${\bf{g}}_{m} \in \mathbb{C}^{N}$
with the covariance of ${\bf{R}}_{m}^{\sf e} = \mathbb{E}\left[ {\bf{g}}_{m} {\bf{g}}_{m}^{\sf H} \right]$.
We first assume that perfect CSIT of the users and eavesdroppers are available at the AP.
Then, we further consider the case of imperfect CSIT of eavesdroppers for a  more practical scenario where only the channel covariance of eavesdroppers is available at the AP.
The received signal at user $k$ is
\begin{align} \label{eq:rx_signal_user}
    y_{k} = \bh_k^{\sf H}\bff_k s_k + \sum_{i = 1,i\neq k}^{K} {\bf{h}}_{k}^{\sf H} {\bf{f}}_{i} s_i + \sum_{j = 1}^{J} {\bf{h}}_{k}^{\sf H} {{\pmb {\phi}}_{j} z_j} + {n_k},
\end{align}
where $n_k \sim \mathcal{CN}(0,\sigma^2)$ is additive white Gaussian noise (AWGN).
Similarly, the received signal at eavesdropper $m$ overhearing user $k$ is expressed as
\begin{align}
    y_{m,k}^{\sf e}=\bg_m^{\sf H} {\bf{f}}_{k} s_k + \sum_{i = 1, i\neq k}^{K} \bg_{m}^{\sf H} {\bf{f}}_{i} s_i + \sum_{j = 1}^{J} \bg_{m}^{\sf H} {\pmb {\phi}_j z_j} + n^{\sf e}_m,
\end{align}
where $n^{\sf e}_m \sim \mathcal{CN}(0,\sigma_{\sf e}^2)$ is the AWGN at the $m$-th eavesdropper.

In the following sections, we introduce a performance metrics and formulate a sum secrecy rate maximization problem. 
Then, we propose a novel optimization framework for joint and simultaneous precoding and AN covariance design.

\begin{figure}[t] 
    \begin{center}
    \includegraphics[width=.6\linewidth]{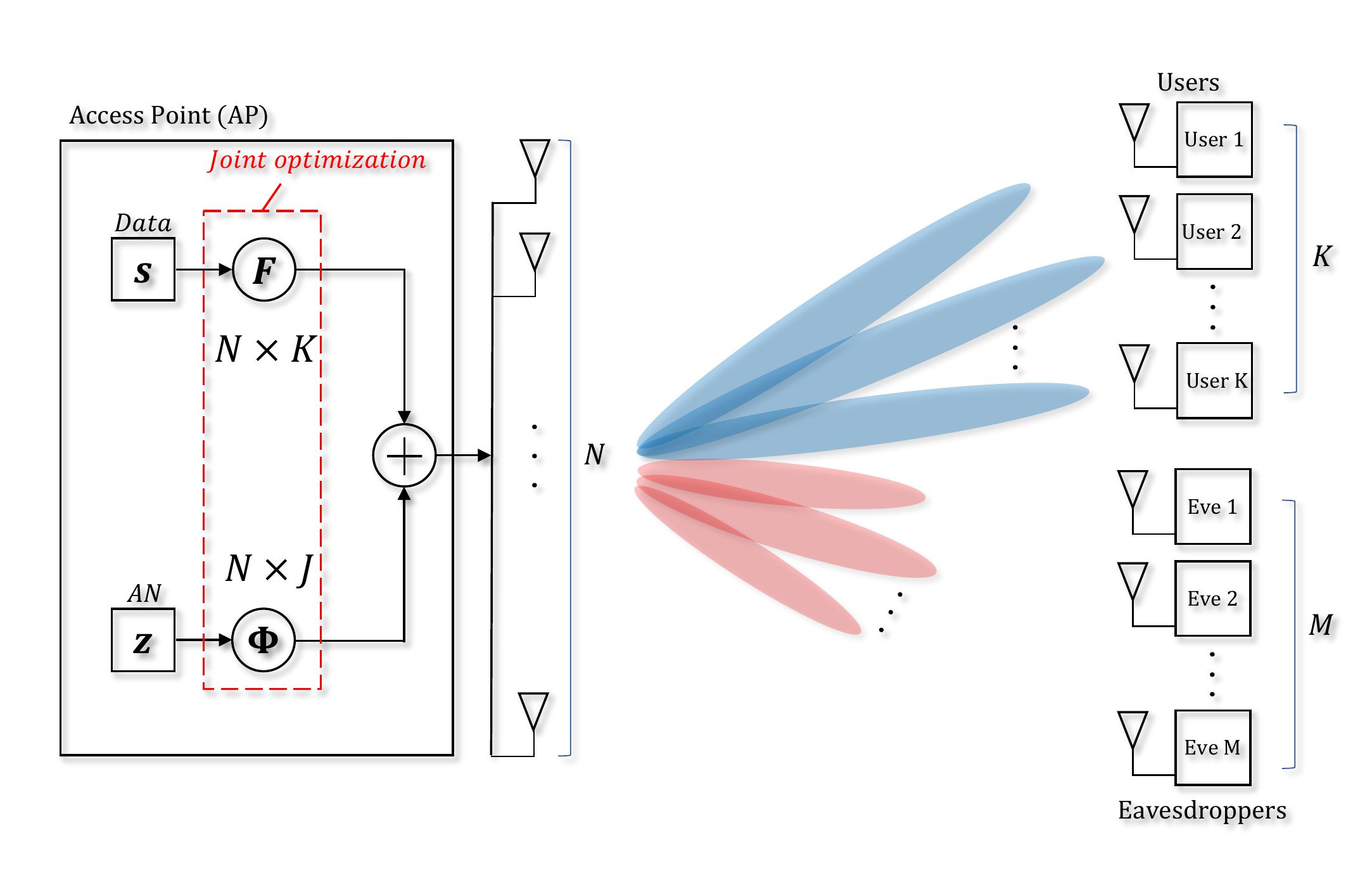}
    \end{center}
    \vspace{-2em}
    \caption{
    An AN-aided secure precoding system for downlink communications with multiple legitimate users  and multiple eavesdroppers.
    }
    \label{fig:system_model}
    
\end{figure}

\section{Problem Formulation}

We assume that each user employs single user decoding treating the interference as noise.
Consequently, the rate of user $k$ is given as
\begin{align} \label{eq:se_user}
    R_{k} & = \log_2 \left(1 + \frac{|\bh_{k}^{\sf H} \bff_k|^2}{\sum_{i = 1, i\neq k}^{K} |\bh_{k}^{\sf H} \bff_i|^2 +
    \sum_{j = 1}^{J} |{\bf{h}}_{k}^{\sf H} {\pmb {\phi}_j}|^2 + 
    \sigma^2/P}\right).
\end{align}
Similarly, the rate of the wiretap channel for the message $s_k$ achieved at eavesdropper $m$ is
\begin{align}
    \label{eq:se_eve}
    R_{m,k}^{\sf e} = \log_2 \left( 1 + \frac{|\bg_{m}^{\sf H} {\bf{f}}_k|^2}{\sum_{i = 1, i\neq k}^{K} |\bg_{m}^{\sf H} {\bf{f}}_i|^2 + \sum_{j = 1}^{J} |\bg_{m}^{\sf H} {\pmb {\phi}_j}|^2 + \sigma_{\sf e}^2/P} \right). \ \ 
\end{align}
The maximum wiretap channel rate determines the secrecy rate in the presence of multiple eavesdroppers \cite{kampeas:isit:16, yang:tcom:16}.
In this regard, we have $\max_{m \in \CMcal{M}}\left\{R_{m, k}^{\sf e}\right\}$ amount of rate loss from $R_k$ to transmit $s_k$ with guaranteed security, and the corresponding secrecy rate becomes
\begin{align}
    \left[R_k - \max_{m \in \CMcal{M}}\left\{R_{m, k}^{\sf e}\right\} \right]^{+}.
\end{align}
Then, using the secrecy rate, we formulate a sum secrecy rate maximization problem as
\begin{align}
    \label{eq:main_problem_eve}
    \mathop{{\text{maximize}}}_{\bF,\bf \Phi}& \;\; R_{\sf sum} = \sum_{k = 1}^{K} \left[R_k - \max_{m \in \CMcal{M}}\left\{R_{m, k}^{\sf e}\right\} \right]^{+}
    \\
    {\text{subject to}} & \;\; \sum_{i = 1}^{K} \left\| \bff_i \right\|^2 + \sum_{j = 1}^{J} \left\| \pmb {\phi}_j \right\|^2 \leq 1, \label{eq:main_problem_const_1}
\end{align}
where 
the constraint in \eqref{eq:main_problem_const_1} is the transmit power constraint.
We remark that to solve \eqref{eq:main_problem_eve} a joint optimization of $\bF$ and $\bf \Phi$ is necessary.

Unfortunately, the problem in \eqref{eq:main_problem_eve} is significantly challenging due to non-smoothness and non-convexity.
Even worse, the secure precoding vectors ${\bf{f}}_k$ and the AN covariance vector ${\pmb{\phi}_j}$ are coupled which makes the optimization process intractable.
In this regard, a null-space approach which fixes $\bf \Phi$ as the null-space projection matrix of the user channels was widely used \cite{1558439,7328729}.
Furthermore, alternating power allocation strategies between the secure precoder and AN covariance matrix have been actively studied \cite{7605477, 5306434}.
Unlike such well-known optimization approaches, our key contribution is the joint optimization framework proposed in the next section that determines the secure precoder, AN covariance matrix, and power ratio between precoder and AN covariance matrix without any alternation.


\section{Joint  Optimization Framework}
\label{sec:joint_optimization}

In this section, we solve the problem \eqref{eq:main_problem_eve} by resolving the key challenges through approximation and reformulation of the problem, leading to a tractable form.
Subsequently, we  propose a joint and simultaneous optimization method for the secure precoding, AN covariance matrix design, and precoder-AN power allocation for the case of perfect CSIT.
Then, we further extend the proposed framework to the case of imperfect CSIT of wiretap channels.
We also introduce the application of the proposed framework to the null-space AN covariance design approach.

\subsection{Joint and Simultaneous Optimization with Perfect CSIT} \label{sec:Opt_perfectCSIT}

To cast the secrecy rate into a tractable form, we first define a joint optimization vector by stacking all the precoding vectors $\bff_k$ and AN covariance vectors $\pmb {\phi}_j$  as
\begin{align} \label{eq:large_vector}
    \bar {\bv} = \left[\bff_1^{\sf T},\bff_2^{\sf T}, \cdots, \bff_K^{\sf T},\pmb {\phi}_1^{\sf T},\pmb {\phi}_2^{\sf T}, \cdots, \pmb {\phi}_J^{\sf T} \right]^{\sf T} \in \mathbb{C}^{N ( K+J )}.
\end{align}
We assume that the norm of \eqref{eq:large_vector} to be equal to one, i.e., $||\bar \bv||=1$, which indicates transmission at maximum transmit power since it is the optimal transmission strategy in term of transmit power and rate.
Using \eqref{eq:large_vector}, we rewrite the legitimate channel rate $R_k$ in \eqref{eq:se_user} as
\begin{align} \label{eq:Ray}
    R_k = \log_2\left( \frac{{\bf \bar v }^{\sf H} {\bf{A}}_k \bar {\bv} }{{\bf \bar v }^{\sf H} {\bf{B}}_k \bar {\bv}} \right) ,
\end{align}
where
\begin{align}
    \label{eq:A_matrix_const_user}   
    &{\bf{A}}_k = {\rm blkdiag}   \left({\bf{h}}_k {\bf{h}}_k^{\sf H}, ..., {\bf{h}}_k {\bf{h}}_k^{\sf H}  \right) + {\bf{I}}_{N   (   K+N   )} \frac{ \sigma^2}{P},
    \\
    \label{eq:B_matrix_const_user}
    &{\bf{B}}_k = {\bf{A}}_k -  {\rm blkdiag} ({\bf{0}}, \cdots, \underbrace{{\bh}_k {\bh}_k^{\sf H}}_{k{\rm th}\; {\rm block}}, \cdots, {\bf{0}} ). 
\end{align}
We note that $\bA_k$ and $\bB_k$ are block diagonal matrices with dimension of $N(K+J) \times N(K+J)$.
Similarly, we express the wiretap channel rate $R_{m,k}^{\sf e}$ in \eqref{eq:se_eve} as
\begin{align} 
    R_{m,k}^{\sf e} = \log_2\left( \frac{{\bf \bar v }^{\sf H} {\bf{C}}_{m,k} \bar {\bv} }{{\bf \bar v }^{\sf H} {\bf{D}}_{m,k} \bar {\bv}} \right),
    \label{eq:eve_Ray}
\end{align}
where
\begin{align} 
    \label{eq:C_matrix_const_eve}
    &{\bf{C}}_{m,k} = {\rm blkdiag}\left({\bg}_m {\bg}_m^{\sf H}, ..., {\bg}_m{\bg}_m^{\sf H}  \right) + {\bf{I}}_{N (K + J)} \frac{ \sigma_{\sf e}^2}{P},
    \\
    \label{eq:D_matrix_const_eve} 
    & {\bf{D}}_{m,k} = {\bf{C}}_{m,k} - {\rm blkdiag} ({\bf{0}}, \cdots, \underbrace{{\bf{g}}_m {\bf{g}}_m^{\sf H}}_{k{\rm th}\; {\rm block}}, \cdots, {\bf{0}} ). 
\end{align}
With \eqref{eq:Ray} and \eqref{eq:eve_Ray}, we reformulate the original problem \eqref{eq:main_problem_eve} to
\begin{align}
    \label{eq:reform_problem_eve}
    \mathop{{\text{maximize}}}_{\bar \bv}& \;  \sum_{k = 1}^{K}   \left\{ \log_2   \left( \frac{{\bf \bar v }^{\sf H} {\bf{A}}_k \bar {\bv} }{{\bf \bar v }^{\sf H} {\bf{B}}_k \bar {\bv}} \right)   -   \max_{m \in \CMcal{M}} \log_2   \left( \frac{\bar{\bv}^{\sf H} {\bC}_{m,k} \bar {\bv} }{\bar{\bv}^{\sf H} {\bD}_{m,k} \bar {\bv} } \right) \right\}
    \\ \nonumber
    {\text{subject to}}& \;  \| \bar {\bv} \| = 1. 
\end{align}
We remark that the reformulated problem in \eqref{eq:reform_problem_eve} is invariant up to the scaling of $\bar \bv$, i.e., scaling of  $\bar \bv$ does not change the problem in \eqref{eq:reform_problem_eve}.
As a result, we can ignore the constraint of $||\bar \bv||=1$ which has no effect on \eqref{eq:reform_problem_eve}.

Next, we utilize the following LogSumExp function to approximate the non-smooth maximum function  \cite{shen2010dual}:
\begin{align}
    \label{eq:max_logsum}
    \max_{i=1,\cdots,N}\{x_i\} \approx
    {\alpha}\log\left(\sum_{i=1}^N \exp\left(\frac{x_i}{\alpha}\right) \right), \; \; \alpha > 0,
\end{align}
where the approximation becomes tight as $\alpha \rightarrow 0$. 
Leveraging \eqref{eq:max_logsum}, we obtain the following approximation for the maximum wiretap channel rate:
\begin{align}\label{eq:max_rate}
    \max_{m \in \CMcal{M}}\left\{R_{m, k}^{\sf e}\right\} \approx \alpha\log\left(\sum_{m=1}^M \exp\left(\frac{R_{m, k}^{\sf e}}{\alpha}\right) \right).
\end{align}
Applying \eqref{eq:max_rate} to \eqref{eq:reform_problem_eve}, the sum secrecy rate $R_{\sf sum}$ is approximated as
\begin{align}
    \label{eq:max_logsum_sse}
    & R_{\sf sum} \approx \sum_{k = 1}^{K}   \left\{ \log_2 \left( \frac{{\bf \bar v }^{\sf H} {\bf{A}}_k \bar {\bv} }{{\bf \bar v }^{\sf H} {\bf{B}}_k \bar {\bv}} \right)   -   \alpha  \log\left(\sum_{m=1}^M   \exp\left(   \frac{1}{\alpha}   \log_2 \left( \frac{\bar{\bv}^{\sf H} {\bC}_{m,k} \bar {\bv} }{\bar{\bv}^{\sf H} {\bD}_{m,k} \bar {\bv} } \right) \right) \right) \right\}.
\end{align}
Finally, simplifying \eqref{eq:max_logsum_sse} further, the problem in \eqref{eq:reform_problem_eve} is reformulated as
\begin{align} \label{eq:reform_problem_apx}
    \mathop{{\text{maximize}}}_{\bar \bv}&   \sum_{k = 1}^{K}   \left\{ \log_2   \left(  \frac{{\bf \bar v }^{\sf H} {\bf{A}}_k \bar {\bv} }{{\bf \bar v }^{\sf H} {\bf{B}}_k \bar {\bv}}   \right)   -   \alpha   \log   \left(\sum_{m=1}^M   \left( \frac{\bar{\bv}^{\sf H} {\bC}_{m,k} \bar {\bv} }{\bar{\bv}^{\sf H} {\bD}_{m,k} \bar {\bv} } \right)^{\beta}  \right)   \right\},
\end{align}
where $\beta = 1/(\alpha\log2)$.

Now, we propose an optimization method that solves the problem in \eqref{eq:reform_problem_apx}.
To this end, we derive the first-order optimality condition of \eqref{eq:reform_problem_apx} in the following lemma:

\begin{lemma} \label{lem:main}
The first-order optimality condition of problem \eqref{eq:reform_problem_apx} is satisfied if the following condition holds:
\begin{align} \label{eq:kkt_lem}
{\bf{B}}^{-1}_{\sf KKT} (\bar {\bv}) {\bf{A}}_{\sf KKT}(\bar {\bv}) \bar {\bv} = \lambda(\bar {\bv})  \bar {\bv},
\end{align}
where 
\begin{align} 
    \label{eq:def_lambda}
    \lambda(\bar {\bv}) 
    &= {\lambda_{\sf num}(\bar {\bv})}/{\lambda_{\sf den}(\bar {\bv})},  
    \\
    \label{eq:def_lambda_num} 
    \lambda_{\sf num}(\bar {\bv}) &= \prod_{k = 1}^{K} \left( {\bf \bar v }^{\sf H} {\bf{A}}_k \bar {\bv} \right), 
    \\
    \label{eq:def_lambda_den}
    \lambda_{\sf den}(\bar {\bv}) &= \prod_{k = 1}^{K} \left( \sum_{m = 1}^{M} \left( \frac{\bar{\bv}^{\sf H} {\bf{C}}_{m,k} \bar {\bv} }{\bar{\bv}^{\sf H} {\bf{D}}_{m,k} \bar {\bv} } \right)^{\beta}\right)^{\alpha} \left( {{\bf \bar v }^{\sf H} {\bf{B}}_k \bar {\bv}} \right),
    \\
    \label{eq:A_kkt_main}
    {\bf{A}}_{\sf KKT}(\bar {\bv}) &= \lambda_{\sf num}(\bar {\bv})   \sum_{k=1}^{K} \left( \frac{1}{\log2}  \frac{\bA_k}{{\bf \bar v }^{\sf H} \bA_k \bar {\bv}}  +   \frac{\alpha \sum_m \beta \left( \frac{\bar{\bv}^{\sf H} {\bf{C}}_{m,k} \bar {\bv} }{\bar{\bv}^{\sf H} {\bf{D}}_{m,k} \bar {\bv} }  \right)^{\beta} \left( \frac{\bD_{m,k}}{{\bf \bar v }^{\sf H} {\bD}_{m,k} \bar {\bv}} \right) }{ \sum_m  \left( \frac{\bar{\bv}^{\sf H} {\bf{C}}_{m,k} \bar {\bv} }{\bar{\bv}^{\sf H} {\bf{D}}_{m,k} \bar {\bv} }  \right)^{\beta} } \right),
    \\
    \label{eq:B_kkt_main}
    {\bf{B}}_{\sf KKT}(\bar {\bv}) &= \lambda_{\sf den}(\bar {\bv})   \sum_{k=1}^{K} \left( \frac{1}{\log2} \frac{\bB_k}{{\bf \bar v }^{\sf H} {\bB}_k \bar {\bv}}   +   \frac{\alpha \sum_m \beta \left( \frac{\bar{\bv}^{\sf H} {\bf{C}}_{m,k} \bar {\bv} }{\bar{\bv}^{\sf H} {\bf{D}}_{m,k} \bar {\bv} }  \right)^{\beta} \left( \frac{\bC_{m,k}}{{\bf \bar v }^{\sf H} \bC_{m,k} \bar {\bv}} \right) }{ \sum_m  \left( \frac{\bar{\bv}^{\sf H} {\bf{C}}_{m,k} \bar {\bv} }{\bar{\bv}^{\sf H} {\bf{D}}_{m,k} \bar {\bv} }  \right)^{\beta} } \right).
\end{align}

\end{lemma}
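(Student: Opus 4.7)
The plan is to observe that the objective of \eqref{eq:reform_problem_apx} equals $\log_2 \lambda(\bar{\bv})$ with $\lambda(\bar{\bv}) = \lambda_{\sf num}(\bar{\bv})/\lambda_{\sf den}(\bar{\bv})$ given by \eqref{eq:def_lambda_num}--\eqref{eq:def_lambda_den}. Expanding $\log_2\lambda_{\sf num}-\log_2\lambda_{\sf den}$ reproduces the $K$ Rayleigh-quotient terms $\log_2(\bar{\bv}^{\sf H}\bA_k\bar{\bv}/\bar{\bv}^{\sf H}\bB_k\bar{\bv})$ together with the $K$ LogSumExp penalty terms for the eavesdroppers, which verifies the identification. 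Because both the objective and the candidate equation \eqref{eq:kkt_lem} are invariant under scalings $\bar{\bv}\mapsto c\bar{\bv}$, the unit-norm constraint drops out and first-order optimality collapses to the unconstrained Wirtinger stationarity condition $\nabla_{\bar{\bv}^*}\log\lambda_{\sf num}(\bar{\bv}) = \nabla_{\bar{\bv}^*}\log\lambda_{\sf den}(\bar{\bv})$.

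Next I would compute both gradients in closed form. Using the standard identity $\nabla_{\bar{\bv}^*}(\bar{\bv}^{\sf H}\bM\bar{\bv}) = \bM\bar{\bv}$, the numerator gradient is immediately $\sum_{k}\bA_k\bar{\bv}/(\bar{\bv}^{\sf H}\bA_k\bar{\bv})$. For the denominator, I would let $u_k(\bar{\bv}) = \sum_m \bigl((\bar{\bv}^{\sf H}\bC_{m,k}\bar{\bv})/(\bar{\bv}^{\sf H}\bD_{m,k}\bar{\bv})\bigr)^\beta$ and apply the chain rule through the factor $u_k^\alpha$ together with the quotient identity
\begin{align*}
\nabla_{\bar{\bv}^*}\frac{\bar{\bv}^{\sf H}\bC_{m,k}\bar{\bv}}{\bar{\bv}^{\sf H}\bD_{m,k}\bar{\bv}} = \frac{\bar{\bv}^{\sf H}\bC_{m,k}\bar{\bv}}{\bar{\bv}^{\sf H}\bD_{m,k}\bar{\bv}}\left(\frac{\bC_{m,k}\bar{\bv}}{\bar{\bv}^{\sf H}\bC_{m,k}\bar{\bv}}-\frac{\bD_{m,k}\bar{\bv}}{\bar{\bv}^{\sf H}\bD_{m,k}\bar{\bv}}\right),
\end{align*}
which yields $\nabla_{\bar{\bv}^*}\log\lambda_{\sf den}(\bar{\bv}) = \sum_k\bB_k\bar{\bv}/(\bar{\bv}^{\sf H}\bB_k\bar{\bv})$ plus exactly the $\alpha$-weighted mix of $\bC_{m,k}$- and $\bD_{m,k}$-terms that populate $\bA_{\sf KKT}$ and $\bB_{\sf KKT}$. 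Grouping the $\bA_k$- and $\bD_{m,k}$-terms (the latter carrying a minus sign from the $\bC-\bD$ difference inside $\nabla u_k$) on one side and the $\bB_k$- and $\bC_{m,k}$-terms on the other, and scaling both sides by the common factor $1/\log 2$, produces
\begin{align*}
\frac{1}{\lambda_{\sf num}(\bar{\bv})}\,\bA_{\sf KKT}(\bar{\bv})\,\bar{\bv} = \frac{1}{\lambda_{\sf den}(\bar{\bv})}\,\bB_{\sf KKT}(\bar{\bv})\,\bar{\bv}.
\end{align*}
Rearranging to $\bA_{\sf KKT}(\bar{\bv})\bar{\bv} = \lambda(\bar{\bv})\,\bB_{\sf KKT}(\bar{\bv})\bar{\bv}$ and pre-multiplying by $\bB_{\sf KKT}^{-1}(\bar{\bv})$ delivers \eqref{eq:kkt_lem}. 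Invertibility of $\bB_{\sf KKT}(\bar{\bv})$ follows from the positive-definite identity shifts $\sigma^2/P\cdot\bI$ and $\sigma_{\sf e}^2/P\cdot\bI$ built into \eqref{eq:A_matrix_const_user} and \eqref{eq:C_matrix_const_eve}.

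The main obstacle I anticipate is the careful bookkeeping of exponents in the nested LogSumExp chain rule: the power-rule factor produced by differentiating $(\cdot)^\beta$ at exponent $\beta-1$ must fuse with the prefactor from the inner quotient-rule identity so that the final weights become exactly $\bigl((\bar{\bv}^{\sf H}\bC_{m,k}\bar{\bv})/(\bar{\bv}^{\sf H}\bD_{m,k}\bar{\bv})\bigr)^\beta$ as they appear in \eqref{eq:A_kkt_main}--\eqref{eq:B_kkt_main}, while the outer $\alpha$ from $\alpha\log u_k$ must remain intact and the relation $\beta=1/(\alpha\log 2)$ must be left untouched so that it feeds back correctly into the definition of $\lambda(\bar{\bv})$. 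Once this algebraic accounting is executed term by term, no inequalities or auxiliary estimates are required, and the lemma drops out as a direct rearrangement.
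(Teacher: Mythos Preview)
Your approach is the same as the paper's: drop the norm constraint by scale invariance, compute the Wirtinger gradient of the objective using the quotient-rule identity you wrote, and rearrange into generalized-eigenvalue form. The paper's Appendix does exactly this, splitting the objective as $L_1-L_2$ with $L_1=\sum_k\log_2(\cdot)$ and $L_2=\sum_k\alpha\log(\cdot)$ and differentiating each piece.

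One bookkeeping slip to flag. Your opening identification ``objective of \eqref{eq:reform_problem_apx} $=\log_2\lambda(\bar\bv)$'' is not exact: the objective uses $\log_2$ on the user Rayleigh quotients but the \emph{natural} logarithm on the LogSumExp penalty, whereas $\log_2\lambda_{\sf den}$ would put both in base $2$. This base mismatch is precisely why the factor $1/\log 2$ in \eqref{eq:A_kkt_main}--\eqref{eq:B_kkt_main} sits only in front of $\bA_k$ and $\bB_k$ and \emph{not} in front of the $\alpha$-weighted eavesdropper terms. If you literally compute $\nabla\log\lambda_{\sf num}=\nabla\log\lambda_{\sf den}$ and then ``scale both sides by the common factor $1/\log 2$'' as you propose, that factor will also hit the $\bC_{m,k}/\bD_{m,k}$ terms, and you will not recover the stated $\bA_{\sf KKT},\bB_{\sf KKT}$ (you will instead get the stationarity condition for $\log\lambda$, which is not the same functional). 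The fix is trivial and is what the paper does: differentiate the actual mixed-base objective $L(\bar\bv)$ directly, so that the $1/\log 2$ arises only from $\partial L_1/\partial\bar\bv^{\sf H}$. With that correction, your chain-rule accounting and the final rearrangement to \eqref{eq:kkt_lem} go through exactly as you describe.
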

\begin{proof}
See Appendix \ref{proof:lem1}. 
\end{proof}


The obtained first-order optimality condition in \eqref{eq:kkt_lem} can be interpreted as a general eigenvalue problem for the matrix ${\bB}^{-1}_{\sf KKT}(\bar \bv){\bA}_{\sf KKT}(\bar \bv)$.
Accrodingly, if $\bar \bv$ is a stationary point of the problem \eqref{eq:reform_problem_apx}, it is an eigenvector of the matrix ${\bB}^{-1}_{\sf KKT}(\bar \bv){\bA}_{\sf KKT}(\bar \bv)$, where the corresponding eigenvalue is $\lambda (\bar \bv)$. 
We remark that the objective function in \eqref{eq:reform_problem_apx} is equivalant to $\log_2 \lambda(\bar \bv)$. 
This observation leads to the following fact: we need to find the leading eigenvector of ${\bB}^{-1}_{\sf KKT}(\bar \bv){\bA}_{\sf KKT}(\bar \bv)$  to maximize the objective function in \eqref{eq:reform_problem_apx}, and then such a leading eigenvector is the best stationary point among all stationary point.
We also remark that once finding the leading eigenvector, we simultaneously obtain the joint precoding and AN covariance solution with power allocation. 
The following proposition summarizes the insight:
\begin{proposition} \label{prop:optimal}
The best local optimal point for problem \eqref{eq:reform_problem_apx} is denoted by $\bar {\bv}^{\star }$, where $\bar {\bv}^{\star}$ is the leading eigenvector of ${\bf{B}}^{-1}_{\sf KKT}(\bar {\bv}^{\star}){\bf{A}}_{\sf KKT}(\bar {\bv}^{\star})$ satisfying
\begin{align}
 {\bf{B}}^{-1}_{\sf KKT}(\bar {\bv}^{\star}){\bf{A}}_{\sf KKT}(\bar {\bv}^{\star}) \bar {\bv}^{\star} = \lambda^{\star} \bar {\bv}^{\star},  
\end{align}
and $\lambda^{\star}$ is the corresponding first (largest) eigenvalue. 
\end{proposition}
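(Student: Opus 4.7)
The plan is to read Proposition \ref{prop:optimal} as a direct corollary of Lemma \ref{lem:main}, via a two-way correspondence between stationary points of \eqref{eq:reform_problem_apx} and self-consistent fixed points of the nonlinear eigenvalue problem (NEP) in \eqref{eq:kkt_lem}. The cornerstone of the argument is the identity
$\log_2 \lambda(\bar{\bv}) = \sum_{k=1}^{K}\bigl\{\log_2(\bar{\bv}^{\sf H}{\bf A}_k\bar{\bv}/\bar{\bv}^{\sf H}{\bf B}_k\bar{\bv}) - \alpha\log_2\sum_{m=1}^{M}(\bar{\bv}^{\sf H}{\bf C}_{m,k}\bar{\bv}/\bar{\bv}^{\sf H}{\bf D}_{m,k}\bar{\bv})^{\beta}\bigr\}$,
which follows by plugging \eqref{eq:def_lambda_num}--\eqref{eq:def_lambda_den} into $\lambda=\lambda_{\sf num}/\lambda_{\sf den}$ and distributing $\log_2$ across the product (the constant $\beta=1/(\alpha\log 2)$ absorbs the base conversion between $\log$ and $\log_2$). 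Hence maximizing the objective in \eqref{eq:reform_problem_apx} is equivalent to maximizing $\lambda(\bar{\bv})$ over the set of NEP fixed points.

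With the identity in hand, the argument consists of three short steps. First, the Rayleigh-quotient structure of \eqref{eq:reform_problem_apx} makes every term homogeneous of degree zero in $\bar{\bv}$, so the norm constraint $\|\bar{\bv}\|=1$ can be dropped without altering the set of stationary directions, as already observed in the paper. Second, Lemma \ref{lem:main} asserts that every stationary direction $\bar{\bv}$ is an eigenvector of the frozen matrix ${\bf B}^{-1}_{\sf KKT}(\bar{\bv}){\bf A}_{\sf KKT}(\bar{\bv})$ with associated eigenvalue exactly $\lambda(\bar{\bv})$. Third, since $\log_2$ is strictly increasing, the stationary point attaining the largest objective value is precisely the fixed point whose eigenvalue $\lambda(\bar{\bv})$ is largest; calling this fixed point $\bar{\bv}^{\star}$ and the associated eigenvalue $\lambda^{\star}$ yields the statement of the proposition.

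The main obstacle is the self-referential nature of the NEP: because ${\bf A}_{\sf KKT}$ and ${\bf B}_{\sf KKT}$ themselves depend on $\bar{\bv}$, the phrase \emph{leading eigenvector} cannot be interpreted via a fixed pencil and must be made self-consistent. The delicate step is therefore to argue that the fixed point maximizing $\lambda(\bar{\bv})$ over the NEP solution set is simultaneously the principal eigenvector of the frozen pencil ${\bf B}^{-1}_{\sf KKT}(\bar{\bv}^{\star}){\bf A}_{\sf KKT}(\bar{\bv}^{\star})$ --- any alternative eigenvector of that frozen pencil whose eigenvalue strictly exceeded $\lambda^{\star}$ would, by the objective/eigenvalue identity above, yield a strictly larger objective value, contradicting the assumed maximality of $\bar{\bv}^{\star}$ among all stationary points. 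This self-consistency reasoning, combined with the objective/eigenvalue identity, is exactly what converts the nonlinear KKT condition of Lemma \ref{lem:main} into the principal-eigenvector characterization asserted in Proposition \ref{prop:optimal}.
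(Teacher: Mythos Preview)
Your first two steps---the objective/eigenvalue identity $\log_2\lambda(\bar\bv)$ equals the objective in \eqref{eq:reform_problem_apx}, and the invocation of Lemma~\ref{lem:main} to identify stationary points with self-consistent eigenvectors---mirror exactly the paper's own reasoning, which is presented informally in the paragraph immediately preceding Proposition~\ref{prop:optimal} rather than as a separate proof. In that sense your approach is the paper's approach.

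Where you go beyond the paper is in the final paragraph, where you try to justify that $\bar\bv^\star$ is not merely \emph{an} eigenvector but the \emph{leading} eigenvector of the frozen matrix ${\bf B}^{-1}_{\sf KKT}(\bar\bv^\star){\bf A}_{\sf KKT}(\bar\bv^\star)$. The paper simply asserts this; you attempt a contradiction argument, but it does not close. Your claim is that an alternative eigenvector $\bar\bu$ of the \emph{frozen} pencil with eigenvalue $\mu>\lambda^\star$ would, via the objective/eigenvalue identity, yield a larger objective value. But the identity you established says the objective at any point $\bar\bu$ equals $\log_2\lambda(\bar\bu)$, where $\lambda(\bar\bu)$ is computed from \eqref{eq:def_lambda}--\eqref{eq:def_lambda_den} using $\bar\bu$ itself---it does \emph{not} say the objective at $\bar\bu$ equals $\log_2\mu$ for $\mu$ the eigenvalue of some \emph{other} frozen pencil. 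Moreover, such a $\bar\bu$ is generally not a stationary point of \eqref{eq:reform_problem_apx} at all (it satisfies ${\bf B}^{-1}_{\sf KKT}(\bar\bv^\star){\bf A}_{\sf KKT}(\bar\bv^\star)\bar\bu=\mu\bar\bu$, not ${\bf B}^{-1}_{\sf KKT}(\bar\bu){\bf A}_{\sf KKT}(\bar\bu)\bar\bu=\lambda(\bar\bu)\bar\bu$), so ``maximality among stationary points'' gives no contradiction. The leading-eigenvector claim therefore remains unproved in your argument, just as it is in the paper; it is best read as a heuristic motivating the GPI update rather than a theorem with a complete proof.
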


\begin{algorithm} [t]
\caption{Joint and Simultaneous GPI-based Precoding (JS-GPIP)} \label{alg:perfectCSIT}
{\bf{initialize}}: $\bar {\bv}_{0}$\\
Set the iteration count $t = 1$ and GPI threshold $\epsilon_1$.\\
\While {$\left\|\bar {\bv}_{t} - \bar {\bv}_{t-1} \right\| > \epsilon_1$}
{
Create the matrices $ {\bf{A}}_{\sf KKT} (\bar {\bv}_{t-1})$ and $ {\bf{B}}_{\sf KKT} (\bar {\bv}_{t-1})$ by using \eqref{eq:A_kkt_main} and \eqref{eq:B_kkt_main}, respectively.  \\
Compute $\bar {\bv}_{t} \leftarrow \frac{{\bf{B}}_{\sf KKT}^{-1} (\bar {\bv}_{t-1}) {\bf{A}}_{\sf KKT} (\bar {\bv}_{t-1}) {\bv}_{t-1}}
{  \| {\bf{B}}_{\sf KKT}^{-1} (\bar {\bv}_{t-1}) {\bf{A}}_{\sf KKT} (\bar {\bv}_{t-1}) \bar {\bv}_{t-1} \| }$.\\
 $t \leftarrow t+1$.}
 
$\bar {\bv}^{\star} \leftarrow \bar {\bv}_{t}$\\
\Return{\ }{$\bar {\bv}^{\star}$}.
\end{algorithm}
To find the leading eigenvector, we adopt the generalized power iteration (GPI) method \cite{choi:arxiv:19} which efficiently finds the vector in an iterative manner. 
In particular, during the $t$-th iteration, we update the vector in the current iteration as
\begin{align}
    \label{eq:gpi_update}
    \bar {\bv}_{t} \leftarrow \frac{{\bf{B}}^{-1}_{\sf KKT} (\bar {\bv}_{t-1}) {\bf{A}}_{\sf KKT} (\bar {\bv}_{t-1}) \bar {\bv}_{t-1}}{ \| {\bf{B}}^{-1}_{\sf KKT} (\bar {\bv}_{t-1}) {\bf{A}}_{\sf KKT} (\bar {\bv}_{t-1}) \bar {\bv}_{t-1} \|}.
\end{align}
Until the convergence criterion $\left\|\bar {\bv}_{t} - \bar {\bv}_{t-1} \right\| < \epsilon_1$ is met, we repeat \eqref{eq:gpi_update}. 
Algorithm~\ref{alg:perfectCSIT} summarizes the GPI-based joint and simultaneous precoding and AN covariance optimization method.
\subsection{Joint and Simultaneous Optimization with Imperfect Wiretap CSIT} \label{sec:imperfect_csi}

Now, we extend the proposed algorithm to the  imperfect CSIT case of eavesdroppers in which only the  channel covariances are known for wiretap channels at the AP.
Since the perfect CSIT of the wiretap channel is not available at the AP, using the wiretap channel rate for each channel realization which is defined in \eqref{eq:se_eve} is no longer feasible.  
Consequently, we consider the wiretap channel rate as a perspective of an ergodic rate, i.e., we now use $\bbE[R_{m,k}^{\sf e}]$ instead of $R_{m,k}^{\sf e}$ as our performance metric to utilize the wiretap channel covariance.
Using the ergodic wiretap channel rate, we reformulate the optimization problem in \eqref{eq:main_problem_eve} as
\begin{align}
    \label{eq:main_problem_imperfect}
    \mathop{{\text{maximize}}}_{\bF,\bf \Phi}& \; \; \sum_{k = 1}^{K} \left[ R_k - \max_{m \in \CMcal{M}} \left\{\mathbb{E}[R_{m, k}^{\sf e}]\right\} \right]^{+}
    \\ 
     {\text{subject to}} & \;\; \sum_{i = 1}^{K} \left\| \bff_i \right\|^2 + \sum_{j = 1}^{J} \left\| \pmb {\phi}_j \right\|^2 \leq 1. 
\end{align}
To exploit the knowledge of the wiretap channel covariance, we need to properly handle $\bbE[R_{m,k}^{\sf e}]$.
To this end, we approximate $\bbE[R_{m,k}^{\sf e}]$  as
\begin{align} 
    \mathbb{E}[R_{m,k}^{\sf e}  ]   &=   \mathbb{E}   \left[   \log_2   \left(   1   +   \frac{|\bg_{m}^{\sf H} {\bf{f}}_k|^2}{\sum_{i\neq k}^{K}   |\bg_{m}^{\sf H} {\bf{f}}_i|^2   +   \sum_{j = 1}^{J}   |\bg_{m}^{\sf H} {\pmb {\phi}_j}|^2   +   \sigma_{\sf e}^2/P}   \right)   \right]
    \\ 
    &\stackrel{(a)}{\approx} \log_2   \left(   1   +   \frac{\mathbb{E}[|\bg_{m}^{\sf H} {\bf{f}}_k|^2]}{\mathbb{E}[\sum_{i\neq k}^{K}   |\bg_{m}^{\sf H} {\bf{f}}_i|^2]   +   \mathbb{E}[\sum_{j = 1}^{J}   |\bg_{m}^{\sf H} {\pmb {\phi}_j}|^2]   +   \sigma_{\sf e}^2/P}   \right)
    \\ 
    &= \log_2 \left( 1 + \frac{\bff^{\sf H}_{k} {\bR}^{\sf e}_{m} \bff_{k}}{\sum^{K}_{i \neq k} {\bff}^{\sf H}_{i} {\bR}^{\sf e}_{m} {\bff}_{i} + \sum^{J}_{j = 1} {\pmb \phi}^{\sf H}_{j} {\bR}^{\sf e}_{m} {\pmb \phi}_{j} + {\sigma}^{2}_{\sf e}/P } \right)
    \\
    \label{eq:Rmk_approx}
    &= \tilde R_{m, k}^{\sf e},
\end{align}
where $(a)$ follows from Lemma 1 in \cite{6816003}.
Replacing $\mathbb{E}[R_{m,k}^{\sf e}  ]$ in \eqref{eq:main_problem_imperfect} with $\tilde R_{m,k}^{\sf e}$ in \eqref{eq:Rmk_approx}, the problem in \eqref{eq:main_problem_imperfect}  becomes
\begin{align}
    \label{eq:obj_func_ub}
     \mathop{{\text{maximize}}}_{\bF,\bf \Phi} & \;\; \sum_{k = 1}^{K} \left[ R_k - \max_{m \in \CMcal{M}} \left\{\tilde R_{m, k}^{\sf e}\right\} \right]^{+}
    \\
     {\text{subject to}} & \;\; \sum_{i = 1}^{K} \left\| \bff_i \right\|^2 + \sum_{j = 1}^{J} \left\| \pmb {\phi}_j \right\|^2 \leq 1.
\end{align}

 
Now, following the similar steps as in Section \ref{sec:Opt_perfectCSIT}, we can derive the first-order optimality condition of the problem in \eqref{eq:obj_func_ub}.
Let us define $\bar {{\bC}}_{m,k}$ and $\bar {{\bD}}_{m,k}$ as
\begin{align} 
    \label{eq:C_matrix_const_eve2}
    &\bar {{\bC}}_{m,k} = {\rm blkdiag}   \left(\bR_{m}^{\sf e}, ..., \bR_{m}^{\sf e}  \right) + {\bf{I}}_{N   (   K+J   )} \frac{ \sigma_{\sf e}^2}{P},   
    \\
    \label{eq:D_matrix_const_eve2} 
    & \bar {{\bD}}_{m,k} = \bar {{\bC}}_{m,k} - {\rm blkdiag} ({\bf{0}}, \cdots,   \underbrace{\bR_{m}^{\sf e}}_{k{\rm th} \; {\rm block}}  , \cdots, {\bf{0}} ).
\end{align}
Then, the first order optimality condition in the form of the generalized eigenvalue problem is
\begin{lemma} \label{lem:imperfectCSIT}
    The first-order optimality condition of problem \eqref{eq:obj_func_ub} is satisfied if the following condition holds:
    \begin{align} \label{eq:kkt_secB}
        {\bar \bB}^{-1}_{\sf KKT} (\bar {\bv}) {\bar \bA}_{\sf KKT}(\bar {\bv}) \bar {\bv} = \bar{\lambda}(\bar {\bv})  \bar {\bv},
    \end{align}
    where
    \begin{align} 
        \label{eq:def_lambda_bar}
        \bar \lambda(\bar {\bv}) 
        &= {{\lambda}_{\sf num}(\bar {\bv})}/{\bar {\lambda}_{\sf den}(\bar {\bv})},
    \\
     \label{eq:def_lambda_den_bar}
    \bar {\lambda}_{\sf den}(\bar {\bv}) &= \prod_{k = 1}^{K} \left( \sum_{m = 1}^{M} \left( \frac{\bar{\bv}^{\sf H} \bar {\bf{C}}_{m,k} \bar {\bv} }{\bar{\bv}^{\sf H} \bar {\bf{D}}_{m,k} \bar {\bv} } \right)^{\beta}\right)^{\alpha} \left( {{\bf \bar v }^{\sf H} {\bf{B}}_k \bar {\bv}} \right),
    \\
    \label{eq:A_kkt_imperfect}
    \bar {\bf{A}}_{\sf KKT}(\bar {\bv}) &= {\lambda}_{\sf num}(\bar {\bv}) \sum_{k=1}^{K} \left( \frac{1}{\log2} \frac{\bA_k}{{\bf \bar v }^{\sf H} \bA_k \bar {\bv}} +  \frac{\alpha \sum_m \beta \left( \frac{\bar{\bv}^{\sf H} {\bar {\bf{C}}}_{m,k} \bar {\bv} }{\bar{\bv}^{\sf H} \bar {{\bf{D}}}_{m,k} \bar {\bv} }  \right)^{\beta} \left( \frac{\bar {\bD}_{m,k}}{{\bf \bar v }^{\sf H} {\bar {\bD}}_{m,k} \bar {\bv}} \right) }{ \sum_m  \left( \frac{\bar{\bv}^{\sf H} \bar {{\bf{C}}}_{m,k} \bar {\bv} }{\bar{\bv}^{\sf H} \bar {{\bf{D}}}_{m,k} \bar {\bv} }  \right)^{\beta} } \right),
    \\
    \label{eq:B_kkt_imperfect}
    \bar {\bf{B}}_{\sf KKT}(\bar {\bv}) &= \bar {\lambda}_{\sf den}(\bar {\bv}) \sum_{k=1}^{K} \left( \frac{1}{\log2} \frac{\bB_k}{{\bf \bar v }^{\sf H} {\bB}_k \bar {\bv}} + \frac{\alpha \sum_m \beta \left( \frac{\bar{\bv}^{\sf H} {\bar {\bf{C}}}_{m,k} \bar {\bv} }{\bar{\bv}^{\sf H} \bar {{\bf{D}}}_{m,k} \bar {\bv} }  \right)^{\beta} \left( \frac{\bar {\bC}_{m,k}}{{\bf \bar v }^{\sf H} {\bar {\bC}}_{m,k} \bar {\bv}} \right) }{ \sum_m  \left( \frac{\bar{\bv}^{\sf H} \bar {{\bf{C}}}_{m,k} \bar {\bv} }{\bar{\bv}^{\sf H} \bar {{\bf{D}}}_{m,k} \bar {\bv} }  \right)^{\beta} } \right),
    \end{align}
    and $\bA_k$, $\bB_k$, and $\lambda_{\sf num}(\bar {\bv})$ are defined in \eqref{eq:A_matrix_const_user}, \eqref{eq:B_matrix_const_user}, and \eqref{eq:def_lambda_num}, respectively.
\end{lemma}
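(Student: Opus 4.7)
The plan is to follow the proof template of Lemma \ref{lem:main} almost verbatim, exploiting the fact that the ergodic-rate approximation $\tilde R_{m,k}^{\sf e}$ inherits the same Rayleigh-quotient structure as the instantaneous wiretap rate $R_{m,k}^{\sf e}$. The only new ingredient is that the rank-one outer product $\bg_m \bg_m^{\sf H}$ is replaced, block by block, by the covariance $\bR_m^{\sf e}$.

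First I would rewrite $\tilde R_{m,k}^{\sf e}$ as a ratio of quadratic forms in the stacked variable $\bar{\bv}$. Using $\mathbb{E}[|\bg_m^{\sf H} \bu|^2] = \bu^{\sf H} \bR_m^{\sf e} \bu$ for any deterministic $\bu$, each term $\bff_i^{\sf H} \bR_m^{\sf e} \bff_i$ and ${\pmb\phi}_j^{\sf H} \bR_m^{\sf e} {\pmb\phi}_j$ can be aggregated via the block-diagonal matrices $\bar{\bC}_{m,k}$ and $\bar{\bD}_{m,k}$ defined in \eqref{eq:C_matrix_const_eve2}--\eqref{eq:D_matrix_const_eve2}, yielding
\[
\tilde R_{m,k}^{\sf e} \;=\; \log_2\!\left(\frac{\bar{\bv}^{\sf H}\bar{\bC}_{m,k}\bar{\bv}}{\bar{\bv}^{\sf H}\bar{\bD}_{m,k}\bar{\bv}}\right),
\]
which is exactly \eqref{eq:eve_Ray} under the substitution $\bC_{m,k}\mapsto\bar{\bC}_{m,k}$, $\bD_{m,k}\mapsto\bar{\bD}_{m,k}$; the legitimate-user rate $R_k$ is unchanged, so $\bA_k$ and $\bB_k$ remain as in \eqref{eq:A_matrix_const_user}--\eqref{eq:B_matrix_const_user}.

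Next I would replace the non-smooth $\max_{m}\{\tilde R_{m,k}^{\sf e}\}$ by the LogSumExp approximation \eqref{eq:max_logsum} exactly as in Section \ref{sec:Opt_perfectCSIT}. After the same simplification that led to \eqref{eq:reform_problem_apx}, the objective becomes $\log_2 \bar{\lambda}(\bar{\bv})$ with $\bar{\lambda}(\bar{\bv})={\lambda_{\sf num}(\bar{\bv})}/{\bar{\lambda}_{\sf den}(\bar{\bv})}$, where $\lambda_{\sf num}$ is still \eqref{eq:def_lambda_num} and $\bar{\lambda}_{\sf den}$ is obtained from \eqref{eq:def_lambda_den} via the same substitution, matching \eqref{eq:def_lambda_bar}--\eqref{eq:def_lambda_den_bar}. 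The scaling-invariance argument below \eqref{eq:reform_problem_eve} still applies, so the unit-norm constraint can be dropped and the problem becomes an unconstrained maximization of $\log_2 \bar{\lambda}(\bar{\bv})$. Setting the Wirtinger gradient $\nabla_{\bar{\bv}^*}\log \bar{\lambda}(\bar{\bv})=\mathbf{0}$, using the standard identity $\nabla_{\bar{\bv}^*}\log(\bar{\bv}^{\sf H}\bM\bar{\bv}) = \bM\bar{\bv}/(\bar{\bv}^{\sf H}\bM\bar{\bv})$ for any Hermitian $\bM$, together with the chain rule through the LogSumExp term, produces a stationarity condition of the form $\bar{\bA}_{\sf KKT}(\bar{\bv})\bar{\bv}=\bar{\lambda}(\bar{\bv})\bar{\bB}_{\sf KKT}(\bar{\bv})\bar{\bv}$, which rearranges into \eqref{eq:kkt_secB}.

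The main obstacle is purely bookkeeping: differentiating $-\alpha\log\!\big(\sum_m (\bar{\bv}^{\sf H}\bar{\bC}_{m,k}\bar{\bv}/\bar{\bv}^{\sf H}\bar{\bD}_{m,k}\bar{\bv})^{\beta}\big)$ yields the $\bar{\bD}_{m,k}$ contribution with a positive sign (hence it enters $\bar{\bA}_{\sf KKT}$) and the $\bar{\bC}_{m,k}$ contribution with a negative sign (hence, after moving to the other side and multiplying through by $\bar{\lambda}_{\sf den}(\bar{\bv})$, it enters $\bar{\bB}_{\sf KKT}$). This explains the numerator/denominator swap between the rate expression and the KKT matrices in \eqref{eq:A_kkt_imperfect}--\eqref{eq:B_kkt_imperfect}. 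Once signs, the factors of $1/\log 2$ from converting $\log_2$ to $\log$, and the weighting by $\lambda_{\sf num}(\bar{\bv})$ and $\bar{\lambda}_{\sf den}(\bar{\bv})$ are tracked carefully, no analytical ingredient beyond those already used in the proof of Lemma \ref{lem:main} is required.
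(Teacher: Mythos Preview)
Your proposal is correct and mirrors the paper's approach: the paper does not give a separate proof of Lemma~\ref{lem:imperfectCSIT} but simply states that the derivation follows the same steps as in Section~\ref{sec:Opt_perfectCSIT} (i.e., the proof of Lemma~\ref{lem:main}) after replacing $\bg_m\bg_m^{\sf H}$ by $\bR_m^{\sf e}$, which is exactly the substitution $\bC_{m,k}\mapsto\bar{\bC}_{m,k}$, $\bD_{m,k}\mapsto\bar{\bD}_{m,k}$ you describe.
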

Algorithm~\ref{alg:perfectCSIT} can directly find the leading eigenvector of \eqref {eq:kkt_secB} by replacing  ${\bA}_{\sf KKT}$ and ${\bB}_{\sf KKT}$ with $\bar {\bA}_{\sf KKT}$ and $\bar {\bB}_{\sf KKT}$, respectively. 
We call the algorithm as JS-GPIP (Cov) throughout this paper.
Consequently, we can jointly and simultaneously design the secure precoder, AN covariance matrix, and precoder-AN power allocation when only the channel covariance is available for the wiretap channel.

\begin{remark} \label{rm:S-GPIP}
\normalfont (Secure GPI-based precoding)
    The proposed JS-GPIP and JS-GPIP (Cov) algorithms naturally reduce to a secure GPI-based precoding algorithm (S-GPIP) without AN for the perfect wiretap CSIT and imperfect wiretap CSIT cases by enforcing $\bf \Phi = {\bf 0}$, i.e., $J = 0$, respectively. 
\end{remark}

\subsection{Application to Null-Space Projection Approach}

\subsubsection{Alternating Approach}
Although using null-space projection of AN covariance is not optimal, its design principle is greatly intuitive.
In this regard, designing the AN covariance based on the null-space of the users' channels has been a popular suboptimal scheme for maximizing the sum secrecy rate \cite{1558439,7328729}.
In this regard, we also show the application of our framework to the null-space projection approach.
We design $\bf \Phi$ as the null-space of user channels, and $\bF$ as the proposed GPI-based secure precoder.
Then we further need to perform power allocation between  $\bf \Phi$ and $\bF$ in an alternating manner.
To this end, we employ a line search method by explicitly defining a power fraction factor $0\leq \xi \leq 1$.

To begin with, we solve the optimization problem in \eqref{eq:main_problem_eve} for the precoder with the given power fraction factor $\xi$ and AN covariance matrix that is designed as the null-space of the user channels.
Let $\bf \Phi = \sqrt{\xi} \tilde{\bf \Phi}$ and $\bF = \sqrt{1-\xi} \tilde{\bF}$,
where $\tilde{\bf \Phi} = [\bI_{\mathit{N}}-(\bH(\bH^{\mathsf{H}}\bH)^{\mathrm {-1}})^{\mathsf{H}}]_{:,1:J}$ where $[\cdot]_{:,1:J}$ indicates using the first $J$ columns.
Since we assume using maximum transmit power, we have the transmit power to be
\begin{align}
    \label{eq:trace}
    &{\sf{tr}} ((1-\xi) \tilde{\bF} \tilde{\bF}^{\sf H}) + {\sf{tr}} (\xi \tilde{\bf \Phi} \tilde{\bf \Phi}^{\sf H}) = 1.
\end{align}
Equivalently, we rewrite \eqref{eq:trace} as
\begin{align}
    \label{eq:trace2}
    &{\sf{tr}} \left( \frac{(1-\xi)}{1 - {\sf{tr}} (\xi \tilde{\bf \Phi} \tilde{\bf \Phi}^{\sf H})} \tilde{\bF} \tilde{\bF}^{\sf H} \right)= 1.
\end{align}
Now, we define $\bW$ as
\begin{align}
    \label{eq:W_defined}
    &\bW 
    = \frac{1}{\sqrt{1 - {\sf{tr}} (\xi \tilde{\bf \Phi} \tilde{\bf \Phi}^{\sf H})}} \bF.
\end{align}
From \eqref{eq:trace2} and \eqref{eq:W_defined}, we have the power constraint to be ${\sf{tr}} (\bW \bW^{\sf H}) = 1$. 
Leveraging \eqref{eq:W_defined}, we rewrite $R_k$ and $R_{m,k}^{\sf e}$ as
\begin{align} \label{eq:se_user_NS}
    &R_{k}(\bW)= \log_2 \left( 1 + \frac{(1 - {\sf{tr}} (\xi \tilde{\bf \Phi} \tilde{\bf \Phi}^{\sf H}))|\bh_{k}^{\sf H} \bw_k|^2}{( 1 -  {\sf{tr}} (\xi \tilde{\bf \Phi} \tilde{\bf \Phi}^{\sf H}) \sum_{i = 1, i\neq k}^{K} |\bh_{k}^{\sf H} \bw_i|^2 + \xi \sum_{j = 1}^{J}   |{\bf{h}}_{k}^{\sf H} {\tilde{\pmb {\phi}_j}}|^2  + \sigma^2/P} \right)\ \ \ 
\end{align}
and
\begin{align}
    \label{eq:se_eve_NS}
    &R_{m,k}^{\sf e}(\bW) = \log_2   \left(1 + \frac{(1 - {\sf{tr}} (\xi \tilde{\bf \Phi} \tilde{\bf \Phi}^{\sf H}))|\bg_{m}^{\sf H} \bw_k|^2}{(1  - {\sf{tr}} (\xi \tilde{\bf \Phi} \tilde{\bf \Phi}^{\sf H}))   \sum_{i = 1, i\neq k}^{K} |\bg_{m}^{\sf H} \bw_i|^2 + \xi \sum_{j = 1}^{J}   |{\bg_{m}^{\sf H} {\tilde{\pmb {\phi}_j}}|^2 + \sigma_{\sf e}^2/P}} \right).
\end{align}

Similar to \eqref{eq:max_rate}, we approximate the non-smooth maximum objective function by applying the LogSumExp function  with $\bar \bw = \mathrm{vec}(\bW)$.
Then the problem \eqref{eq:main_problem_eve} for given $\bf \Phi$ is reformulated as 
\begin{align} \label{eq:approx_problem_eve_hat}
    \mathop{{\text{maximize}}}_{\bar\bw}&   \sum_{k = 1}^{K}   \left\{   \log_2   \left(   \frac{{{\bar \bw} }^{\sf H} { {\bA}^{\sf ns}_k {\bar \bw}} }{{{\bar \bw} }^{\sf H} {\bB}^{\sf ns}_k  {\bar \bw}}   \right)   -   \alpha   \log   \left(\sum_{m=1}^M   \left(   \frac{{\bar \bw}^{\sf H}  {\bC}^{\sf ns}_{m,k} {\bar \bw} }{\bar{\bw}^{\sf H}  {\bD}^{\sf ns}_{m,k} {\bar \bw} }   \right)^{\beta}  \right)   \right\},
\end{align}
where
\begin{align}
    \label{eq:A_matrix_const_eve_hat_HC}
   {\bA}^{\sf ns}_k &= (1 - {\sf{tr}} (\xi \tilde{\bf \Phi} \tilde{\bf \Phi}^{\sf H})){\rm blkdiag}\left({\bf{h}}_k {\bf{h}}_k^{\sf H}, ..., {\bf{h}}_k {\bf{h}}_k^{\sf H}  \right) + {\bf{I}}_{NK} (\xi   \sum_{j = 1}^{N}  |{\bf{h}}_{k}^{\sf H} {\tilde{\pmb {\phi}_j}}|^2  +  \frac{\sigma^2}{P}), 
    \\
    \label{eq:B_matrix_const_eve_hat_HC}
    {\bB}^{\sf ns}_k &= {\bA}^{\sf ns}_k - (1 - {\sf{tr}} (\xi \tilde{\bf \Phi} \tilde{\bf \Phi}^{\sf H})){\rm blkdiag}({\bf{0}}, \cdots,  \underbrace{{\bh}_k {\bh}_k^{\sf H}}_{k{\rm th}\; {\rm block}} , \cdots, {\bf{0}} ),
    \\
    \label{eq:C_matrix_const_eve_hat_HC}
    {\bC}^{\sf ns}_{m,k} &= (1 - {\sf{tr}} (\xi \tilde{\bf \Phi} \tilde{\bf \Phi}^{\sf H})){\rm blkdiag}\left({\bf{g}}_m {\bf{g}}_m^{\sf H}, ..., {\bf{g}}_m {\bf{g}}_m^{\sf H}  \right) + {\bf{I}}_{NK} (\xi   \sum_{j = 1}^{N}   |{\bf{g}}_{m}^{\sf H} {\tilde{\pmb {\phi}_j}}|^2  +  \frac{\sigma_{\sf e}^2}{P}) ,    
    \\
    \label{eq:D_matrix_const_eve_hat_HC} 
    {\bD}^{\sf ns}_{m,k}  &= {\bC}^{\sf ns}_{m,k}  - (1 - {\sf{tr}} (\xi \tilde{\bf \Phi} \tilde{\bf \Phi}^{\sf H}) ){\rm blkdiag}({\bf{0}},  \cdots , \underbrace{{\bg}_m {\bg}_m^{\sf H}}_{k{\rm th}\; {\rm block}} ,  \cdots, {\bf{0}} ). 
\end{align}
We also derive the first-order optimality condition of \eqref{eq:approx_problem_eve_hat} in the following lemma:
\begin{lemma} \label{lem:KKT2}
    The first-order optimality condition of problem \eqref{eq:approx_problem_eve_hat} is satisfied if the following condition holds: 
    \begin{align} \label{eq:kkt_lem2}
     {\bf{B}}^{{\sf ns} -1}_{\sf KKT} (\bar \bw) {\bf{A}}^{\sf ns}_{\sf KKT}(\bar \bw) \bar \bw = \lambda^{\sf ns}(\bar \bw) \bar \bw,
    \end{align}
where
\begin{align}
    \label{eq:def_lambda_ns}
    \lambda^{\sf ns}(\bar {\bw}) 
    &= {{\lambda}^{\sf ns}_{\sf num}(\bar {\bw})}/{{\lambda}^{\sf ns}_{\sf den}(\bar {\bw})},
    \\
    {\lambda}^{\sf ns}_{\sf num}(\bar {\bw}) &= \prod_{k = 1}^{K} \left( {\bar \bw }^{\sf H} {\bf{A}}^{\sf ns}_k \bar {\bw} \right), \label{eq:def_lambda_num_hat} 
    \\
    \label{eq:def_lambda_den_ns}
   {\lambda}^{\sf ns}_{\sf den}(\bar {\bw}) &= \prod_{k = 1}^{K} \left( \sum_{m = 1}^{M} \left( \frac{\bar{\bw}^{\sf H} \bC^{\sf ns}_{m,k} \bar {\bw} }{\bar{\bw}^{\sf H}  \bD^{\sf ns}_{m,k} \bar {\bw} } \right)^{\beta}\right)^{\alpha} \left( {\bar \bw }^{\sf H} \bB^{\sf ns}_k \bar {\bw} \right).
   \\
    \label{eq:A_kkt_ns}
    {\bf{A}}^{\sf ns}_{\sf KKT}(\bar \bw) &= {\lambda}^{\sf ns}_{\sf num}(\bar \bw)   \sum_{k=1}^{K} \left( \frac{1}{\log2}  \frac{\bA^{\sf ns}_k}{{\bar \bw }^{\sf H} \bA^{\sf ns}_k \bar {\bw}} + \frac{\alpha \sum_m \beta \left( \frac{\bar{\bw}^{\sf H}  {\bC}^{\sf ns}_{m,k} \bar {\bw} }{\bar{\bw}^{\sf H} {\bD}^{\sf ns}_{m,k} \bar {\bw} }  \right)^{\beta} \left( \frac{{\bD}^{\sf ns}_{m,k}}{{\bar \bw }^{\sf H} {\bD}^{\sf ns}_{m,k} \bar {\bw}} \right) }{ \sum_m  \left( \frac{\bar{\bw}^{\sf H} {\bC}^{\sf ns}_{m,k} \bar {\bw} }{\bar{\bw}^{\sf H} {\bD}^{\sf ns}_{m,k} \bar {\bw} }  \right)^{\beta} } \right),
    \\
    \label{eq:B_kkt_ns}
    {\bB}^{\sf ns}_{\sf KKT}(\bar \bw) &= {\lambda}^{\sf ns}_{\sf den}(\bar {\bw})  \sum_{k=1}^{K} \left( \frac{1}{\log2}  \frac{\bB^{\sf ns}_k}{{\bar \bw }^{\sf H} {\bB}^{\sf ns}_k \bar {\bw}} + \frac{\alpha \sum_m \beta \left( \frac{\bar{\bw}^{\sf H} {\bC}^{\sf ns}_{m,k} \bar {\bw} }{\bar{\bw}^{\sf H} {\bD}^{\sf ns}_{m,k} \bar {\bw} }  \right)^{\beta} \left( \frac{{\bC}^{\sf ns}_{m,k}}{{\bar \bw }^{\sf H} {\bC}^{\sf ns}_{m,k} \bar {\bw}} \right) }{ \sum_m  \left( \frac{\bar{\bw}^{\sf H} {\bC}^{\sf ns}_{m,k} \bar {\bw} }{\bar{\bw}^{\sf H} {\bD}^{\sf ns}_{m,k} \bar {\bw} }  \right)^{\beta} } \right).
\end{align}
\end{lemma}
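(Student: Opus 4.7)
The plan is to mirror the derivation used in Lemma~\ref{lem:main} essentially verbatim, since problem \eqref{eq:approx_problem_eve_hat} has the same algebraic shape as \eqref{eq:reform_problem_apx}: a sum of logarithms of products of Rayleigh quotients in $\bar\bw$. The only substitutions are $\bar\bv \mapsto \bar\bw$ and $(\bA_k,\bB_k,\bC_{m,k},\bD_{m,k})\mapsto(\bA^{\sf ns}_k,\bB^{\sf ns}_k,\bC^{\sf ns}_{m,k},\bD^{\sf ns}_{m,k})$. Because every kernel that appears is Hermitian positive semidefinite and the objective is invariant under scaling $\bar\bw\to c\bar\bw$, the unit-norm constraint plays no role in the first-order conditions and can be dropped when differentiating.

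First, I would rewrite the objective of \eqref{eq:approx_problem_eve_hat} as $\log_2 \lambda^{\sf ns}(\bar\bw)=\log_2 \lambda^{\sf ns}_{\sf num}(\bar\bw)-\log_2 \lambda^{\sf ns}_{\sf den}(\bar\bw)$, using the definitions \eqref{eq:def_lambda_num_hat}–\eqref{eq:def_lambda_den_ns}; this identity is an exercise in collecting the $K$-fold product from the sum of logs and pushing the $\alpha\log\sum_m(\cdot)^\beta$ factor into the denominator. Then I would compute the Wirtinger derivative with respect to $\bar\bw^*$ using the basic identity $\partial_{\bar\bw^*}\log(\bar\bw^{\sf H}\bM\bar\bw)=\bM\bar\bw/(\bar\bw^{\sf H}\bM\bar\bw)$. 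Applying this term by term to $\log_2\lambda^{\sf ns}_{\sf num}$ yields the first summand inside \eqref{eq:A_kkt_ns} (the $\bA^{\sf ns}_k/(\bar\bw^{\sf H}\bA^{\sf ns}_k\bar\bw)$ piece), while differentiating the denominator via the chain rule through $\alpha\log\sum_m(\cdot)^\beta$ and the Rayleigh quotient inside produces two cross-terms: a $\bD^{\sf ns}_{m,k}$-weighted piece that moves into $\bA^{\sf ns}_{\sf KKT}$ and a $\bC^{\sf ns}_{m,k}$-weighted piece that stays in $\bB^{\sf ns}_{\sf KKT}$, both with the softmax-type weights $(\bar\bw^{\sf H}\bC^{\sf ns}_{m,k}\bar\bw/\bar\bw^{\sf H}\bD^{\sf ns}_{m,k}\bar\bw)^\beta / \sum_{m'}(\cdot)^\beta$.

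Setting the gradient to zero and multiplying through by $\lambda^{\sf ns}_{\sf num}$ and $\lambda^{\sf ns}_{\sf den}$ so that both sides acquire the common scalar prefactors in \eqref{eq:A_kkt_ns} and \eqref{eq:B_kkt_ns}, the stationarity condition collapses to $\bA^{\sf ns}_{\sf KKT}(\bar\bw)\bar\bw = \lambda^{\sf ns}(\bar\bw)\,\bB^{\sf ns}_{\sf KKT}(\bar\bw)\bar\bw$, which is exactly \eqref{eq:kkt_lem2} after left-multiplying by $(\bB^{\sf ns}_{\sf KKT})^{-1}$. The main obstacle is the bookkeeping in the chain-rule step through the LogSumExp-style denominator: one must verify that the $\bD^{\sf ns}_{m,k}$-contributions end up in $\bA^{\sf ns}_{\sf KKT}$ while the $\bC^{\sf ns}_{m,k}$-contributions remain in $\bB^{\sf ns}_{\sf KKT}$, with the softmax weights normalized correctly and the factor $\beta=1/(\alpha\log 2)$ absorbing the change from $\log_2$ to $\log$. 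Once this pairing is checked, the invertibility of $\bB^{\sf ns}_{\sf KKT}$ follows from the positive-definite structure of $\bB^{\sf ns}_k$ (inherited from the $\sigma^2/P$ and $\sigma_{\sf e}^2/P$ regularizers in \eqref{eq:B_matrix_const_eve_hat_HC} and \eqref{eq:C_matrix_const_eve_hat_HC}), completing the argument.
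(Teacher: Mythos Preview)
Your proposal is correct and matches the paper's approach: the paper does not give a separate proof of Lemma~\ref{lem:KKT2}, relying instead on the fact that problem \eqref{eq:approx_problem_eve_hat} has exactly the same Rayleigh-quotient structure as \eqref{eq:reform_problem_apx}, so the derivation in Appendix~\ref{proof:lem1} carries over verbatim under the substitutions $\bar\bv\mapsto\bar\bw$ and $(\bA_k,\bB_k,\bC_{m,k},\bD_{m,k})\mapsto(\bA^{\sf ns}_k,\bB^{\sf ns}_k,\bC^{\sf ns}_{m,k},\bD^{\sf ns}_{m,k})$. Your bookkeeping of where the $\bC^{\sf ns}_{m,k}$- and $\bD^{\sf ns}_{m,k}$-terms land, the softmax weights, and the role of $\beta=1/(\alpha\log 2)$ is consistent with the Lemma~\ref{lem:main} computation.
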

\begin{algorithm} [t]
\caption{Joint GPI-based Precoding with Null-Space AN (J-GPIP-NS)} \label{alg:J-GPIP-NS}
{\bf{initialize}}:  $\bar \bff^{(0)} =  {\rm vec}({\bF}^{(0)})$ and $\tilde{\bf{\Phi}}=[\bI_{\mathit{N}}-(\bH(\bH^{\mathsf{H}}\bH)^{\mathrm {-1}})^{\mathsf{H}}]_{:,1:J}$.\\
Set  iteration counts $t = 1$ and $i=1$, $\xi^{(0)} = 0$, step size of power fraction factor $\Delta \xi$, and GPI threshold $\epsilon_2$.\\
\While {$ \xi^{(i)} \leq 1$}
{
    Compute ${\bf\Phi}^{(i)} = \sqrt{\xi^{({i})}} \tilde{\bf \Phi}$ and $\bar\bw_\mathrm{0}^{(i)} = \frac{1}{\sqrt{1 - {\sf{tr}} (\xi^{(\mathrm{i})} \tilde{\bf \Phi} \tilde{\bf \Phi}^{\sf H})}}\bar \bff^{(i)}$.\\
    \While {$\left\|\bar {\bw}^{(i)}_{t} - \bar {\bw}^{(i)}_{t-1} \right\| > \epsilon_2$}
    {
    Create the matrices $ {\bf{A}}^{\sf ns}_{\sf KKT} (\bar{\bw}_{t-1}^{(i)})$ and $ {\bB}^{\sf ns}_{\sf KKT} (\bar{\bw}_{t-1}^{(i)})$ by using \eqref{eq:A_kkt_ns}, \eqref{eq:B_kkt_ns}, and $\xi^{(i)}$.  \\
    Compute $\bar{\bw}_{t}^{(i)} \leftarrow \frac{ {\bB}^{{\sf ns} -1}_{\sf KKT} (\bar {\bw}_{t-1}^{(i)}) {\bA}^{\sf ns}_{\sf KKT} (\bar {\bw}_{t-1}^{(i)}) \bar {\bw}_{t-1}^{(i)} }{ \| {\bB}^{{\sf ns} -1}_{\sf KKT} (\bar {\bw}_{t-1}^{(i)}) {\bA}^{\sf ns}_{\sf KKT} (\bar {\bw}_{t-1}^{(i)}) \bar {\bw}_{t-1}^{(i)} \| }$.\\
     $t \leftarrow t+1$.}
     $\bar\bff^{(i)} \leftarrow \bar \bw_t^{(i)} \sqrt{1 - {\sf{tr}} (\xi^{(i)} \tilde{\bf \Phi} \tilde{\bf \Phi}^{\sf H})}$. \\
     $\xi^{(i+1)} \leftarrow \xi^{(i)} + \Delta \xi$. \\
     $i\leftarrow i+1.$
 }
$[\bar\bff^{\star}, \ {\bf \Phi}^{\star}] = \argmax_{(\bar \bff^{(i)},{\bf \Phi}^{(i)})} \sum_{\mathit{k} \mathrm{ = 1}}^{\mathit{K}}\left[\mathit{R_k}-\max_{\mathit{m} \in \CMcal{M}}\left\{\mathit{R}_{\mathit{m, k}}^{\sf e}\right\} \right]^{+}$. \\

\Return{\ }{$\bF^{\star}$ and ${\bf {\Phi}}^{\star}$}.
\end{algorithm}
Subsequently, we can find the leading eigenvector  of \eqref{eq:kkt_lem2} by using the GPI method, and derive the precoder $\bF$ based on \eqref{eq:W_defined} accordingly.
We perform the same procedure by increasing the power fraction factor $\xi$ to derive $\bF$ for different power allocation cases.
Then we select the power fraction factor $\xi^\star$ that offers the highest sum secrecy rate and use the corresponding pair of $\bF^\star $ and $\bf \Phi^\star$.
We summarize the proposed joint  GPI-based precoding with null-space AN approach (J-GPIP-NS) in Algorithm~\ref{alg:J-GPIP-NS}.

\subsubsection{Low-complexity (non-alternating) Approach} 
We note that J-GPIP-NS repeatedly computes GPI for each $\xi^{(i)}$, which causes a high computational burden.
To reduce computational complexity of J-GPIP-NS, we propose a low complexity version of the null-space projection based GPI method that computes the precoder by ignoring the AN and also compute the AN covariance matrix as the null-space projection matrix. 
Then, by defining a power fraction factor $\xi_L$, we perform line searching and select the precoder and AN pair with the highest performance.
In this case, only the power scaling of the precoder and AN covariance changes over line searching.
This approach is described in Algorithm~\ref{alg:J-GPIP-NS-Low}.
Let $\bar \bff = {\rm vec}(\bF)$. 
Then $\hat \bA^{\sf ns}_{\sf KKT}$ and $\hat \bB^{\sf ns}_{\sf KKT}$ can be  computed by setting $\xi = 0$ in  \eqref{eq:A_kkt_ns} and \eqref{eq:B_kkt_ns} as follows:
\begin{align} \label{eq:A_kkt_hat_ns}
    &\hat {\bf{A}}^{\sf ns}_{\sf KKT}(\bar \bff) = \hat {\lambda}^{\sf ns}_{\sf num}(\bar \bff) \sum_{k=1}^{K} \left( \frac{1}{\log2}  \frac{\hat \bA^{\sf ns}_k}{{\bar \bff }^{\sf H} \hat  \bA^{\sf ns}_k \bar {\bff}} + \frac{\alpha \sum_m \beta \left( \frac{\bar{\bff}^{\sf H} \hat {\bC}^{\sf ns}_{m,k} \bar {\bff} }{\bar{\bff}^{\sf H} \hat {\bD}^{\sf ns}_{m,k} \bar {\bff} }  \right)^{\beta} \left( \frac{\hat {\bD}^{\sf ns}_{m,k}}{{\bar \bff }^{\sf H} \hat {\bD}^{\sf ns}_{m,k} \bar {\bff}} \right) }{ \sum_m  \left( \frac{\bar{\bff}^{\sf H} \hat {\bC}^{\sf ns}_{m,k} \bar {\bff} }{\bar{\bff}^{\sf H} \hat {\bD}^{\sf ns}_{m,k} \bar {\bff} }  \right)^{\beta} } \right),
\end{align}
\begin{align} \label{eq:B_kkt_hat_ns}
    \hat {\bB}^{\sf ns}_{\sf KKT}(\bar \bff) &= \hat {\lambda}^{\sf ns}_{\sf den}(\bar {\bff}) \sum_{k=1}^{K} \left( \frac{1}{\log2}  \frac{\hat \bB^{\sf ns}_k}{{\bar \bff }^{\sf H} \hat {\bB}^{\sf ns}_k \bar {\bff}} + \frac{\alpha \sum_m \beta \left( \frac{\bar{\bff}^{\sf H} \hat {\bC}^{\sf ns}_{m,k} \bar {\bff} }{\bar{\bff}^{\sf H} \hat {\bD}^{\sf ns}_{m,k} \bar {\bff} }  \right)^{\beta} \left( \frac{\hat {\bC}^{\sf ns}_{m,k}}{{\bar \bff }^{\sf H} \hat {\bC}^{\sf ns}_{m,k} \bar {\bff}} \right) }{ \sum_m  \left( \frac{\bar{\bff}^{\sf H} \hat {\bC}^{\sf ns}_{m,k} \bar {\bff} }{\bar{\bff}^{\sf H} \hat {\bD}^{\sf ns}_{m,k} \bar {\bff} }  \right)^{\beta} } \right),
\end{align}
where
\begin{align}
\label{eq:def_lambda_hat_ns}
    \hat \lambda^{\sf ns}(\bar {\bff}) 
    &= {\hat {\lambda}^{\sf ns}_{\sf num}(\bar {\bff})}/{\hat {\lambda}^{\sf ns}_{\sf den}(\bar {\bff})},
    \\
    \hat {\lambda}^{\sf ns}_{\sf num}(\bar {\bff}) &= \prod_{k = 1}^{K} \left( {\bar \bff }^{\sf H} \hat {\bA}^{\sf ns}_k \bar {\bff} \right), \label{eq:def_lambda_num_check} 
    \\
      \label{eq:def_lambda_den_hat_ns}
   \hat {\lambda}^{\sf ns}_{\sf den}(\bar {\bff}) &= \prod_{k = 1}^{K} \left( \sum_{m = 1}^{M} \left( \frac{\bar{\bff}^{\sf H} \hat {\bC}^{\sf ns}_{m,k} \bar {\bff} }{\bar{\bff}^{\sf H} \hat {\bD}^{\sf ns}_{m,k} \bar {\bff} } \right)^{\beta}\right)^{\alpha} \left( {{\bar \bff }^{\sf H} \hat {\bB}^{\sf ns}_k \bar {\bff}} \right).
\end{align}
\begin{algorithm} [t]
\caption{J-GPIP-NS with Low Complexity} \label{alg:J-GPIP-NS-Low} 
{\bf{initialize}}: $\bar \bff_{0} =  {\rm vec}({\bF}_0)$.\\
Set iteration counts $t = 1$ and $i=1$, $\xi_{L}^{(0)}=0$,  $\Delta \xi_{L}$, and $\epsilon_3$.\\
\While {$\left\|\bar {\bff}_{t} - \bar {\bff}_{t-1} \right\| > \epsilon_3$}
{
Create the matrices $ \hat {\bA}^{\sf ns}_{\sf KKT} (\bar {\bff}_{t-1})$ and $\hat {\bB}^{\sf ns}_{\sf KKT} (\bar {\bff}_{t-1})$ by using \eqref{eq:A_kkt_hat_ns} and \eqref{eq:B_kkt_hat_ns}.   \\
Compute $\bar {\bff}_{t} \leftarrow \frac{\hat {\bB}^{{\sf ns} -1}_{\sf KKT} (\bar {\bff}_{t-1}) \hat {\bA}^{\sf ns}_{\sf KKT} (\bar {\bff}_{t-1}) \bar {\bff}_{t-1} }{ \| \hat {\bB}^{{\sf ns} -1}_{\sf KKT} (\bar {\bff}_{t-1}) \hat {\bA}^{\sf ns}_{\sf KKT} (\bar {\bff}_{t-1}) \bar {\bff}_{t-1} \| }$.\\
 $t \leftarrow t+1$.}
 Compute $\tilde{\bf\Phi}=[\bI_{\mathit{N}}-(\bH(\bH^{\mathsf{H}}\bH)^{\mathrm {-1}})^{\mathsf{H}}]_{:,1:J}$\\
\While {$\xi_L^{(i)} \leq 1$}
{
      $\bar {\bff}^{(i)} \leftarrow \sqrt{1-\xi_{L}^{(i)}}\; \bar {\bff}_{t}$. \\
      $  {\bf \Phi}^{(i)} \leftarrow \sqrt{\xi_{L}^{(i)}} \tilde {\bf \Phi}$. \\
     $\xi_{L}^{(i+1)} \leftarrow \xi_{L}^{(i)} + \Delta\xi_L$. \\
     $i\leftarrow i+1.$
     \\
 }
 
$[\bar {\bff}^{\star}, {\bf {\Phi}}^{\star}] = \argmax_{(\bar \bff^{(i)}, {\bf\Phi}^{(i)})} \sum_{\mathit{k} \mathrm{ = 1}}^{\mathit{K}}\left[\mathit{R_k}-\max_{\mathit{m} \in \CMcal{M}}\left\{\mathit{R}_{\mathit{m, k}}^{\sf e}\right\} \right]^{+}$. \\

\Return{\ }{${\bF}^{\star}$ and $ {\bf {\Phi}}^{\star}$}.
\end{algorithm}

\subsection{Complexity Analysis} \label{subsec:Alg_comp}
The computational complexity of Algorithm \ref{alg:perfectCSIT} is dominated by the matrix inversion of ${\bB}_{\sf KKT} (\bar {\bv})$.
Generally, with Gauss–Jordan elimination, computational complexity for the inversion of a $N(K+J) \times N(K+J)$ is order of $\CMcal{O} \left((N(K+J))^3\right)$.
In our case, we can exploit the structure of ${\bB}_{\sf KKT} (\bar {\bv})$ to reduce the complexity.
In particular, since ${\bB}_{\sf KKT} (\bar {\bv})$ is constructed based on a linear combination of block-diagonal and Hermitian matrices ${\bB}_k$ and $\bC_{m,k}$, ${\bB}_{\sf KKT} (\bar {\bv})$ is also a block-diagonal and Hermitian matrix.
Therefore, computing ${\bB}^{-1}_{\sf KKT} (\bar {\bv}^{})$ requires a total of $\CMcal{O} \left( \frac{1}{3}\max (K,J) N^3 \right)$ complexity \cite{choi:arxiv:19}.
Assuming that we have $T_1$ total iterations,  the total complexity of Algorithm \ref{alg:perfectCSIT} is  $\CMcal{O} \left( \frac{1}{3} T_1 \max (K,J) N^3 \right)$.
Similarly, Algorithm \ref{alg:J-GPIP-NS} has a complexity of $\CMcal{O}\left( \frac{1}{3}T_2 K N^3 L \right)$ when we assume that the power iteration per line search takes the similar number of iterations denoted as $T_2$ and consider $L$ to be the number of total line search with $NK \times NK$ block-diagonal Hermitian matrix $\bB^{\sf ns}_{\sf KKT} (\bar {\bw})$.
With an assumption of $T_3$ iterations for the power iteration, Algorithm \ref{alg:J-GPIP-NS-Low} has a total of $\CMcal{O}\left(\frac{1}{3}T_3 K N^3 \right)$ complexity since the line searching method of Algorithm \ref{alg:J-GPIP-NS-Low} does not increase the order of computational complexity.
In particular, for $J \le K$ and $T_1\approx T_2 \approx T_3 = T$, JS-GPIP and J-GPIP-NS (Low) have the lowest complexity of $\CMcal{O}\left( \frac{1}{3}TKN^3 \right)$. 
Such a complexity order is same as a representative low-complexity framework in sum rate maximization (not secrecy rate), namely, the weighted minimum mean square error (MMSE) method \cite{chris:twc:08}.
In addition, the complexity order is also substantially small compared to the existing secure precoding methods. 
For instance, a two-stage optimization approach including solving a sequence of semi-definite programs (SDPs) was proposed with computational complexity order of $\CMcal{O}\left(N^{6.5}\right)$  for a single confidential message and multicast message \cite{mei2017artificial}. 
For visible light communications, a SDP-based algorithm was also developed for a single legitimate user and multiple eavesdroppers with computational complexity order of $O\left(N^{8.5}\right)$ \cite{shi2020artificial}.
As noted, the complexity of JS-GPIP is significantly lower than the state-of-the-art secure precoding algorithms.

\section{Numerical Results} \label{sec:nume}


In this section, we evaluate the performance of the proposed algorithms under our secure precoding framework.
In simulations, we employ the one-ring channel model in \cite{6542746} for generating the small scale fading effects of the channels. 
Regarding the path-loss, we adopt the ITU-R model in \cite{6586668} which models indoor non-line-of-sight path-loss environments.
We use a distance power-loss coefficient as $30$ (equivalent to path-loss exponent of $3$), carrier frequency of $5$ GHz, $10$ MHz bandwidth (passband), $10 - 12$ dB lognormal shadowing variance, and $5$ dB noise figure are considered.
We assume the noise power spectral density of legitimate users and eavesdroppers are the same as $-174$ dBm/Hz.
Users are randomly generated around the AP with the maximum distance of $50\;m$ and minimum distance of $5\; m$ from the AP.
Eavesdroppers are randomly distributed around random users with the maximum distance of $5\; m$ from the users to overhear the transmitted signal.

 \begin{figure}[t] 
    \begin{center}
    \includegraphics[width=.6\linewidth]{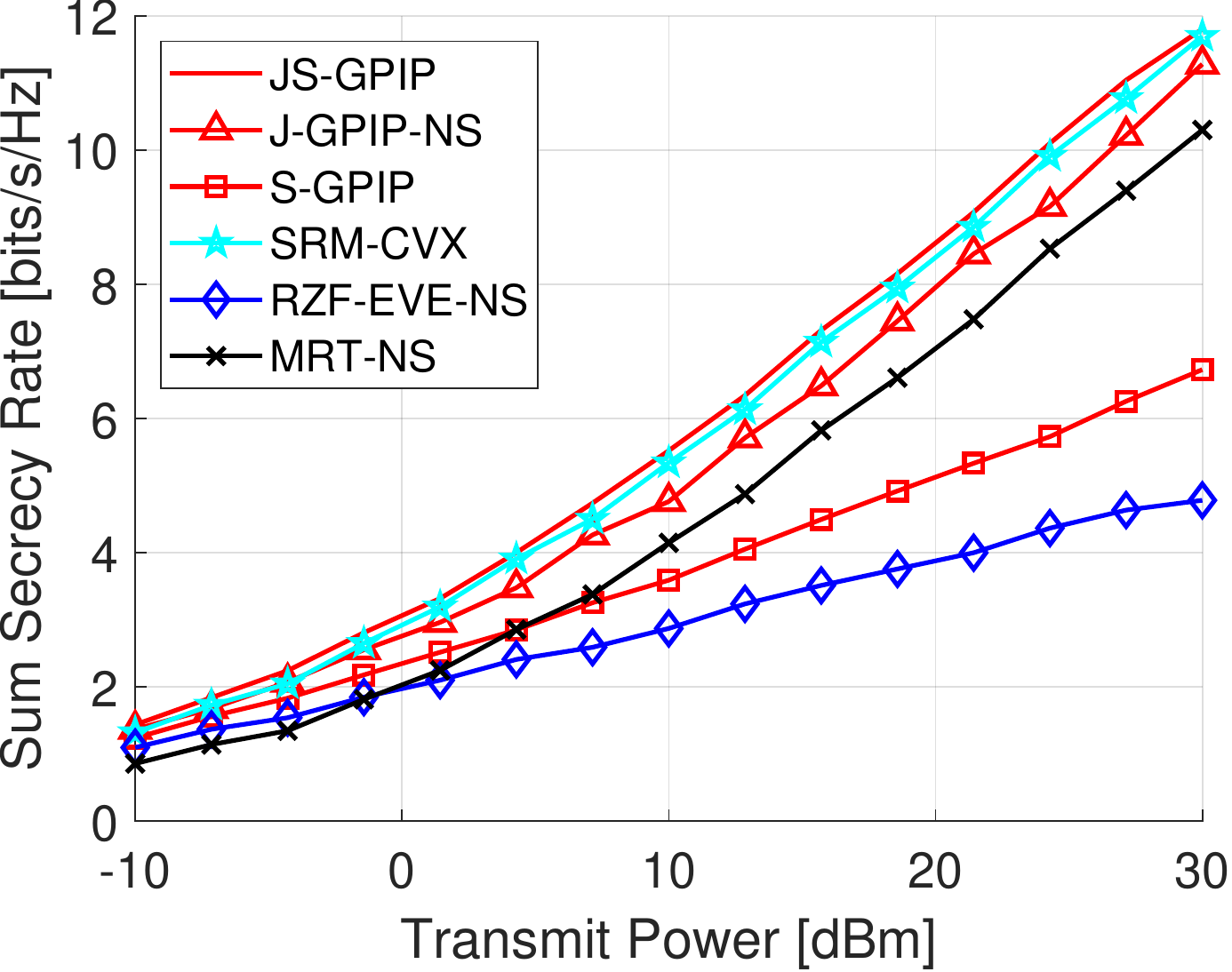}
    \end{center}
    \caption{Sum secrecy rate results for transmit power $P$ with $N = 4$ AP antennas, $K = 1$ users, and $M = 3$ eavesdroppers.}
    \label{fig:SRM_comp}
\end{figure}


Including the proposed algorithms and benchmarks, we evaluate the following cases:
\begin{itemize}
    \item The proposed algorithms: JS-GPIP, JS-GPIP (Cov), J-GPIP-NS, J-GPIP-NS (Low), and S-GPIP.
Recall that JS-GPIP (Cov) is the proposed JS-GPIP for the imperfect wiretap CSIT case and S-GPIP is the proposed JS-GPIP with $\bf \Phi = 0$ as discussed in Remark~\ref{rm:S-GPIP}.
    \item Benchmarks: $(i)$ GPIP in \cite{choi:arxiv:19},  $(ii)$ regularized zero-forcing (RZF), $(iii)$ RZF with considering eavesdroppers in precoding design (RZF-EVE), and $(iv)$ the applications of the benchmarks to the null-space AN projection approach: GPIP-NS, RZF-NS, RZF-EVE-NS, and maximum ratio combining with  null-space AN (MRT-NS).
    \item A CVX-based method (single user case): CVX-based secrecy rate maximization (SRM-CVX) in \cite{6482662} which considers a single user with multiple eavesdroppers.
\end{itemize}
In particular, RZF-EVE-NS builds the $N \times K$ RZF precoder by including $K$ users and selecting $M_{\max}$ eavesdroppers with the highest channel gains $||\bg_{m}||$ where $M_{\max}=\min(N-K,M)$.
The conventional linear precoders with NS extension adopt the line searching method to find the power ratio between precoder and null-space AN, which is similar to Algorithm \ref{alg:J-GPIP-NS-Low}.
We set $J = N$ throughout the simulations unless mentioned otherwise.
We initialize the precoder of the proposed methods with ZF and the AN covariance matrix with the user channel null-space matrix.
In Fig.~\ref{fig:SRM_comp}, we consider $N=4$ AP antennas, $K=1$ users, and $M=3$ eavesdroppers and compare the sum secrecy rates of the simulated algorithms with respect to transmit power $P$. 
We note that SRM-CVX uses a relaxation parameter.
For fairness, we optimize both the relaxation parameter in SRM-CVX and the LogSumExp parameter $\alpha$ in our algorithms via online line searching.
In particular, we employ the online line searching method only for Fig.~\ref{fig:SRM_comp}.
In the rest of the simulations, we use fixed empirical values of the LogSumExp relaxation parameter instead of the online line searching.
\begin{figure}[t] 
    \begin{center}
    \includegraphics[width=.6\linewidth]{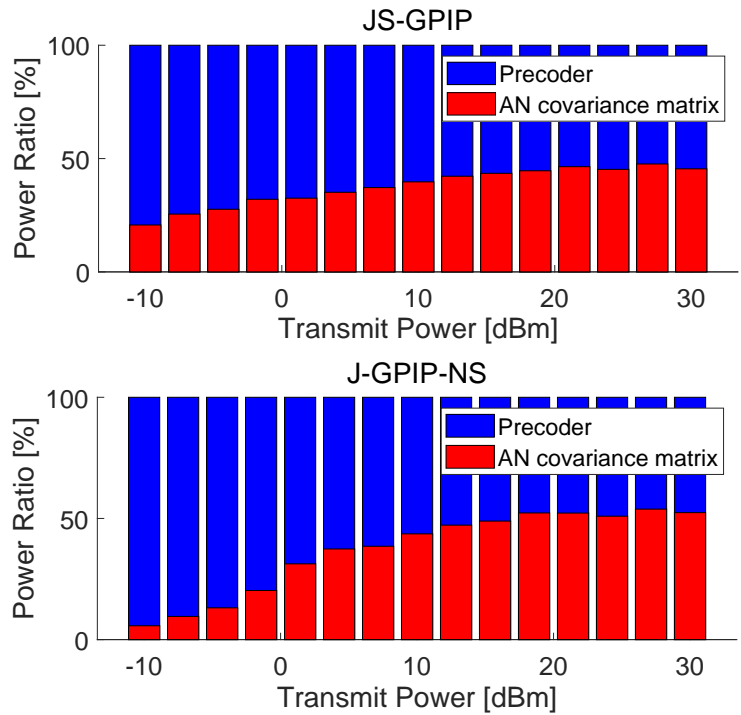}
    \end{center}
    \caption{
    Power ratio for transmit power $P$ results with $N=4$ AP antennas, $K=1$ users, and $M=3$ eavesdroppers.
    }
    \label{fig:power_ratio}
\end{figure}
As shown in Fig.~\ref{fig:SRM_comp}, JS-GPIP achieves the highest secrecy rate in the considered transmit power region followed by SRM-CVX.
In particular, MRT-NS achieves large performance gain compared to RZF-EVE-NS by focusing on maximizing the channel gain of the user and by reducing the wiretap channel rate with null-space AN.
Unlike MRT-NS precoding, RZF-EVE-NS shows relatively insufficient performance gain because the precoder of RZF-EVE-NS wastes its transmit power to reduce the wiretap channel rates, which turns out to be less efficient than concentrating the precoding power to the legitimate user in the single user case.
Since S-GPIP utilizes only the secure precoding methods without AN, it illustrates lower secrecy performance compared to JS-GPIP, J-GPIP-NS, SRM-CVX, and MRT-NS that employ AN.
Accordingly, Fig.~\ref{fig:SRM_comp}  validates the performance of JS-GPIP and demonstrates the benefit of AN by showing a significant gap between the proposed algorithms with and without AN.
 
\begin{figure}[t]
    \centering
    $\begin{array}{c c }
    {\resizebox{0.5\columnwidth}{!}
    {\includegraphics{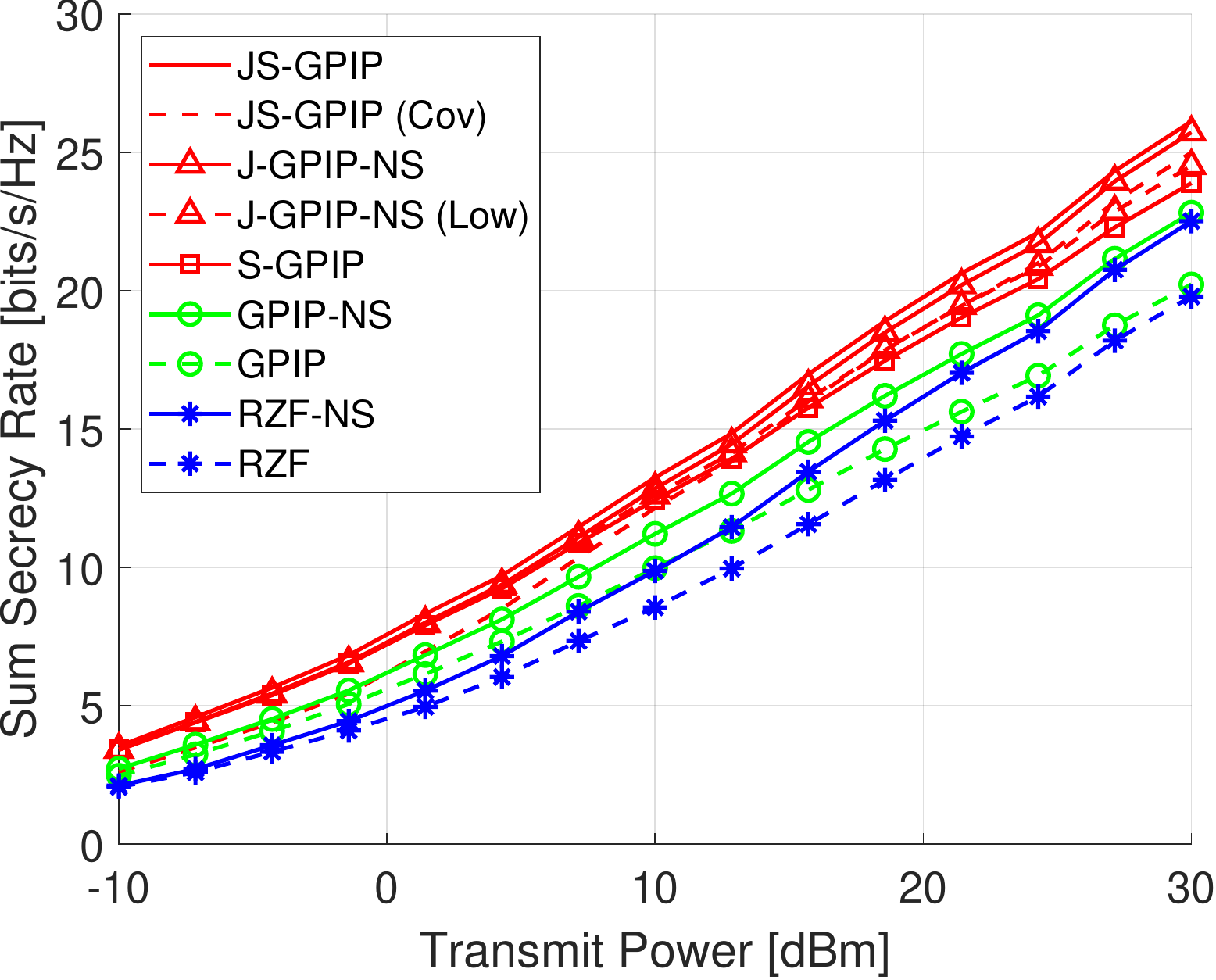}}
    } &
    {\resizebox{0.5    \columnwidth}{!}
    {\includegraphics{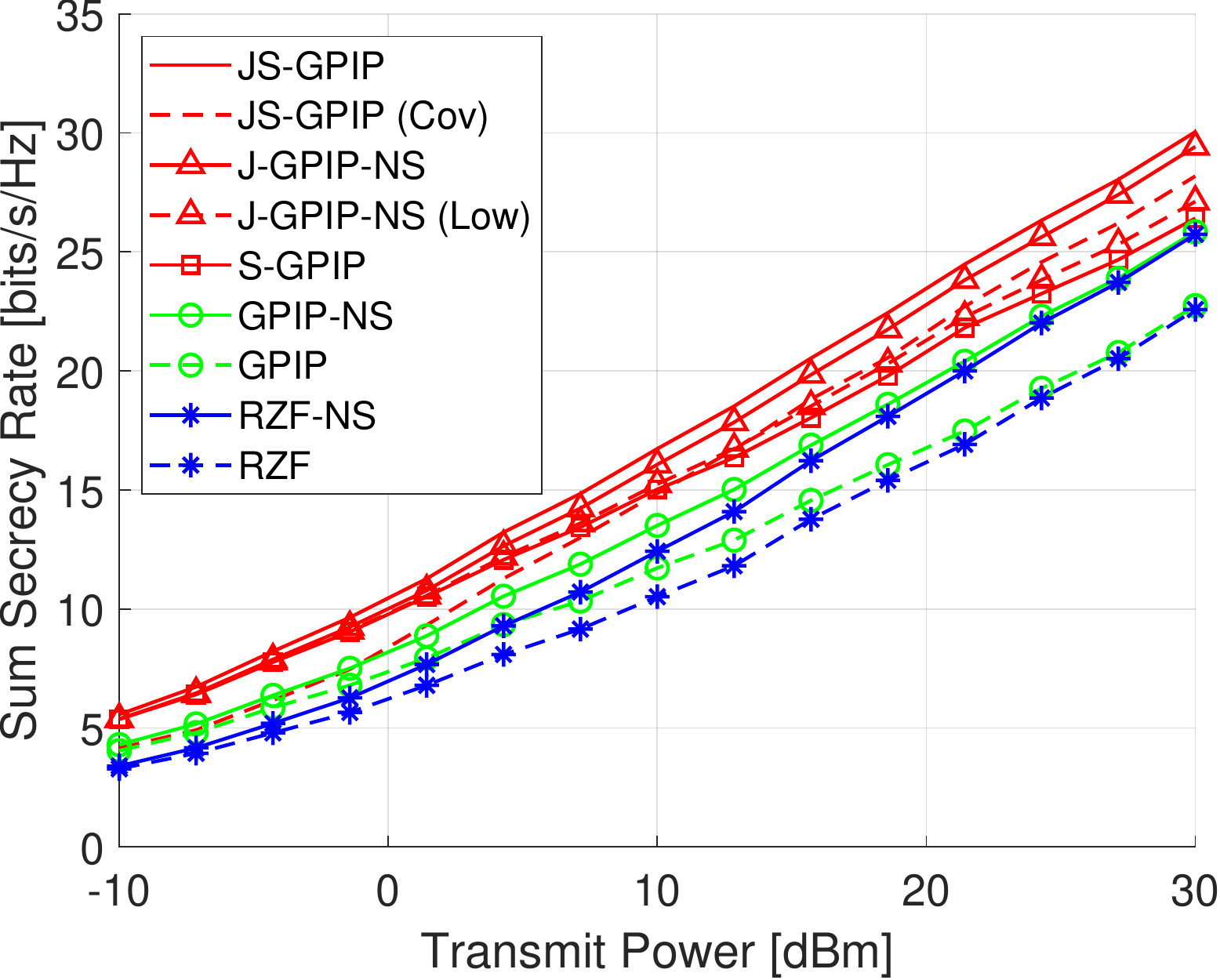}}
    }\\ \mbox{\small (a) $N = 8$ } & \mbox{\small (b) $N = 16$ }
    \end{array}$
    \caption{
    Sum secrecy rate for transmit power $P$ results with (a) $N = 8$ AP antennas, $K = 2$ users, and $M = 4$ eavesdroppers and (b) $N = 16$ AP
        antennas, $K = 2$ users, and $M = 4$ eavesdroppers.}
    \label{fig:SNR}
\end{figure}
 
In Fig.~\ref{fig:power_ratio}, we plot the power ratio between the precoder and AN with respect to the transmit power for JS-GPIP and J-GPIP-NS obtained from the case in Fig.~\ref{fig:SRM_comp}.
The power ratio of the AN increases with the transmit power, and appears to converge  in the high transmit power regime.
As the transmit power increases, the users' channel rate increases logarithmically which makes the performance gain marginal in the high transmit power regime.
Therefore, JS-GPIP and J-GPIP-NS allocate more transmit power to the AN since focusing on the wiretap channel rate reduction is more efficient strategy once the precoder is allocated with enough power in the high transmit power regime.
Although the J-GPIP-NS algorithm employs the line searching to optimize the power ratio, JS-GPIP and J-GPIP-NS illustrate similar behavior.
In particular, JS-GPIP finds the precoder, the AN covariance matrix,  and optimal power ratio simultaneously while the J-GPIP-NS attempts to find the power ratio by the line searching method for the fixed null-space AN.
In this perspective, JS-GPIP finds a better solution to maximize the sum secrecy rate, thereby showing the higher rate than that of J-GPIP-NS.
 \begin{figure}[t] 
    \begin{center}
    \includegraphics[width=.6\linewidth]{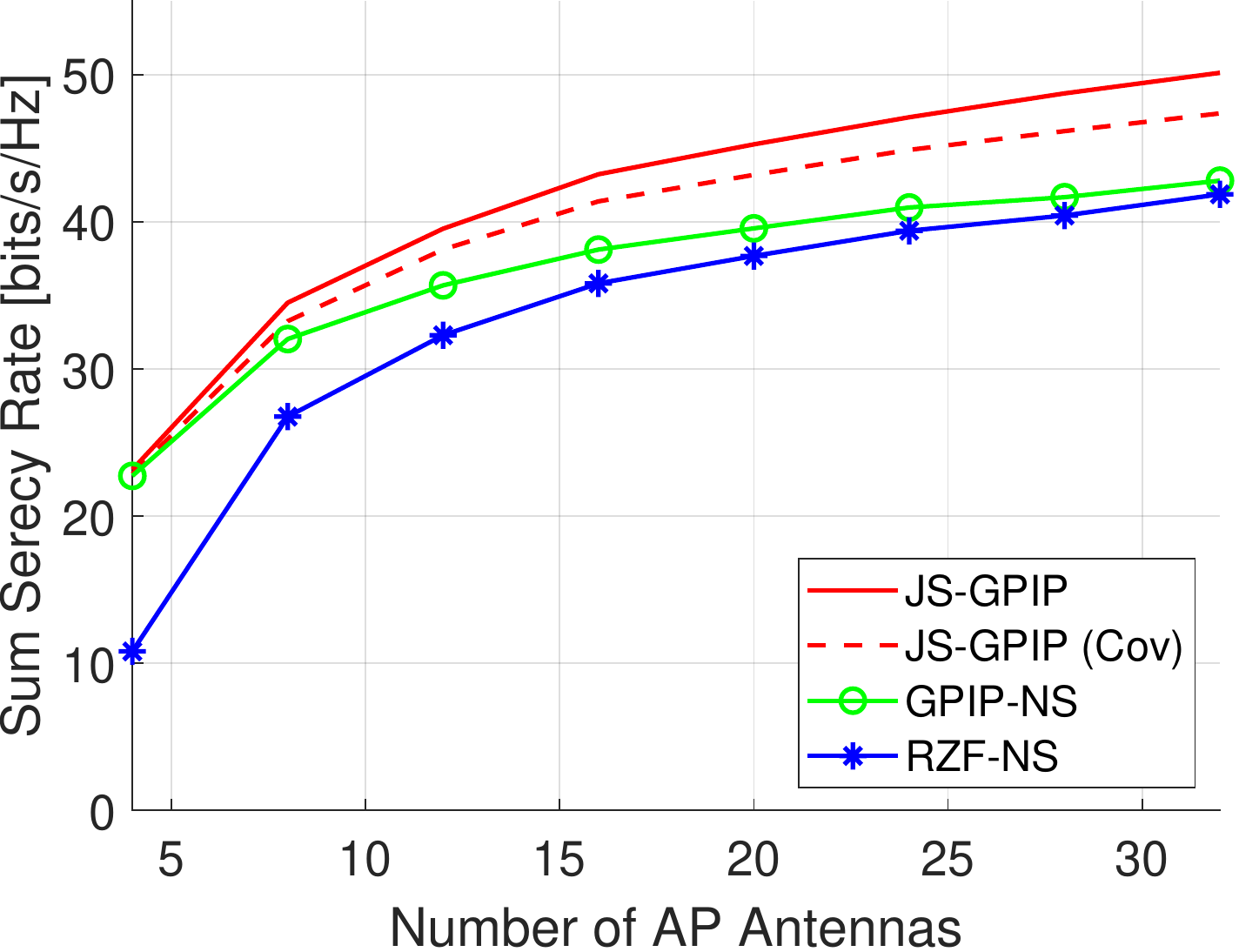}
    \end{center}
    \caption{Sum secrecy rate results for the number of AP antennas $N$ at transmit power $P=20$ dBm  with $K = 4$ users, and $M = 4$ eavesdroppers.}
    \label{fig:Antplot}
\end{figure}
Unlike SRM-CVX which works for a single user, the proposed algorithms work for a general number of users and eavesdroppers. 
In this regard, we further evaluate the proposed algorithms for multiuser cases.
In Fig.~\ref{fig:SNR}, we evaluate the sum secrecy rate of the simulated algorithms with respect to the transmit power.
We consider (a) $N=8$ AP antennas, $K=2$ users, and $M=4$ eavesdroppers and (b) $N=16$ AP antennas, $K=2$ users, and $M=4$ eavesdroppers.
In comparison to the baseline methods, the proposed algorithms achieve the highest secrecy rate  for both the cases of (a) and (b).
Since JS-GPIP jointly optimizes both the precoder and AN covariance with perfect CSIT, it provides the highest rate. 
J-GPIP-NS which also uses perfect CSIT with alternating optmization of the precoder and AN covariance shows the second highest rate.
JS-GPIP (Cov) follows J-GPIP-NS even with the partial wiretap CSIT followed by J-GPIP-NS (Low).
Finally, S-GPIP shows the lowest rate among the proposed method even with perfect CSIT as it does not utilize the AN, yet achieving the higer rate than the benchmarks.
Accordingly, such a trend and performance order are intuitive and correspond to a general insight.
In addition, the gap between GPIP and the proposed algorithms highlights the importance of the secure precoding approach.

\begin{figure}[t] 
    \begin{center}
    \includegraphics[width=.55\linewidth]{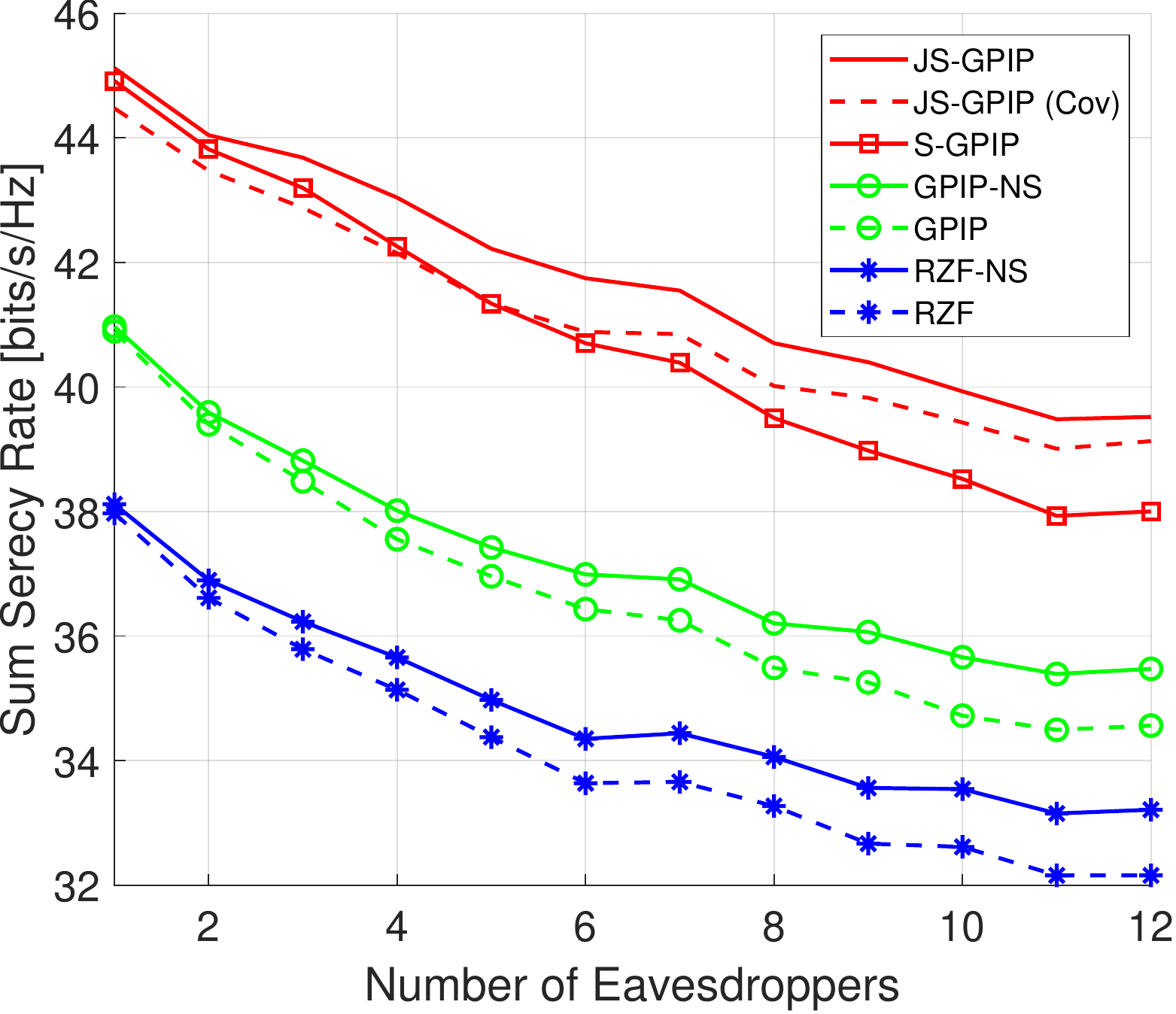}
    \end{center}
            \vspace{-1em}
    \caption{Sum secrecy rate results with respect  to the number of eavesdroppers $M$ for transmit power $P=20$ dBm, $N = 16$ AP antennas, and $K = 4$ users.}
    \label{fig:Eveplot}
            \vspace{-1em}
\end{figure}
In Fig.~\ref{fig:Antplot}, we assess the sum secrecy rate of the proposed algorithms with the other baseline algorithms in terms of the number of AP antennas $N$. 
We consider the transmit power of $P=20$ dBm,  $K=4$ users, and $M=4$ eavesdroppers.
In general, the algorithms exhibit similar performance order as in Fig.~\ref{fig:SNR}.
In particular, the proposed JS-GPIP algorithm still achieves the highest secrecy rate.
When the number of AP antennas is small, it is shown  in Fig.~\ref{fig:Antplot} that GPIP-NS achieves a similar sum secrecy rate to that of JS-GPIP because there is not enough spatial degrees of freedom to nullify the leakage channels while keeping the user SINRs reasonable.
As the number of AP antennas increases, however, the performance gap between JS-GPIP based methods and baseline methods increases; increasing the spatial degrees of freedom allows the joint optimization to be more effective.
Therefore, the proposed joint optimization can be more beneficial in the future communications where the scale of the antennas becomes large.


  
In Fig.~\ref{fig:Eveplot}, we evaluate the sum secrecy rate of the proposed algorithms with the other baseline algorithms in terms of the number of eavesdroppers $M$.
We consider $N = 16$ AP antennas, $K = 4$ users, and  $P=20$ dBm transmit power.
The proposed algorithms also achieve the highest secrecy rates.
We note that the secrecy rate gap between AN-aided and non-AN-aided algorithms increases as the number of eavesdroppers increases.
This tendency comes from the fact that the AN deteriorates all eavesdropper regardless of its number, the overall effect of using AN becomes larger with the number of eavesdroppers.
As a result, the proposed AN-aided joint secure precoding algorithm is considered to be a potential physical layer security algorithm  as the number of eavesdroppers is expected to increase in the future wireless applications.
\begin{figure}[t] 
    \begin{center}
    \includegraphics[width=.55\linewidth]{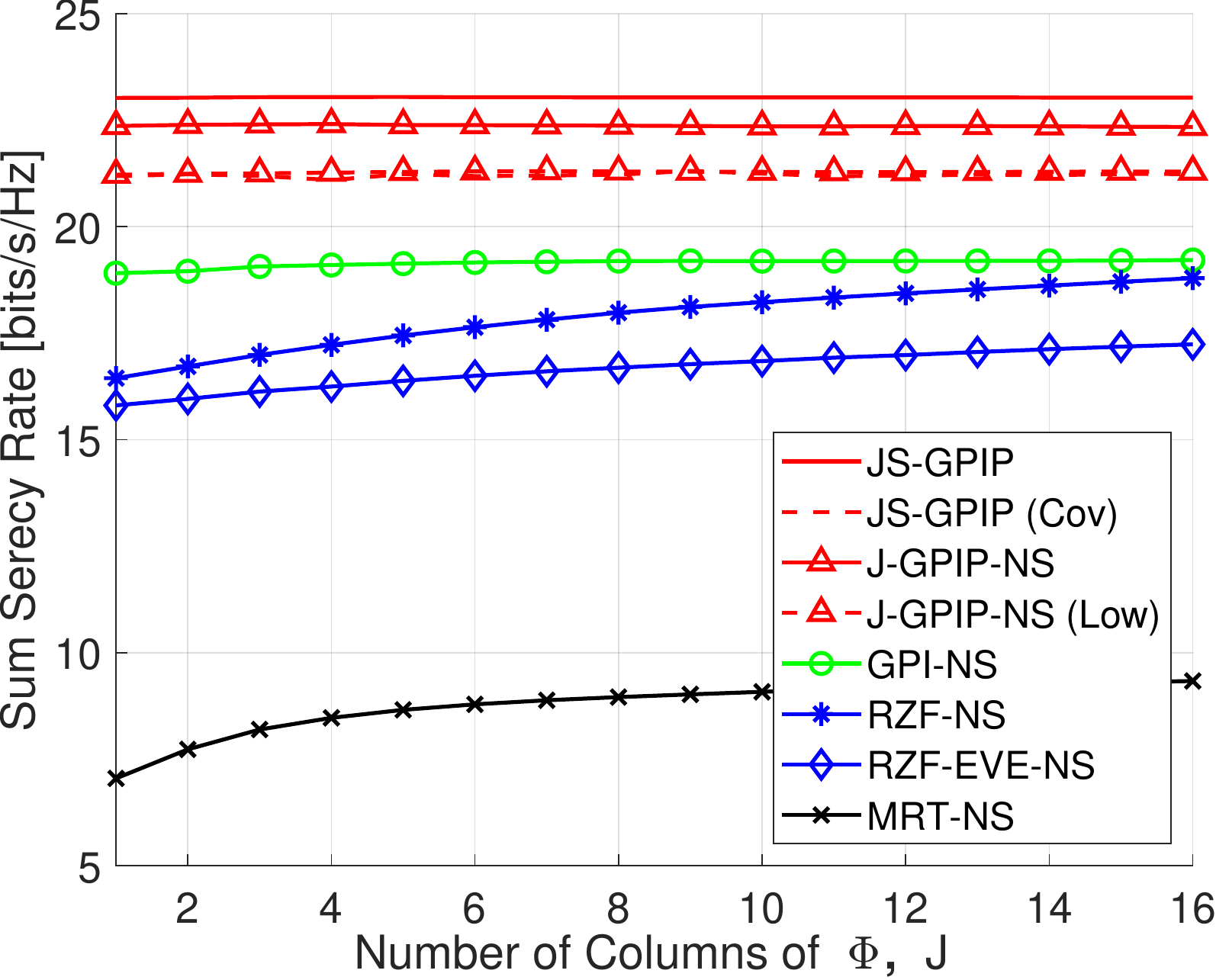}
    \end{center}
            \vspace{-1em}
    \caption{Sum secrecy rate for the number of ${\bf{\Phi}},\ J$ at transmit power $P=20$dBm results with $N = 16$ AP antennas, $K = 2$ users, and $M = 4$ eavesdroppers.}
    \label{fig:ANdimplot}
            \vspace{-1em}
\end{figure}

In Fig.~\ref{fig:ANdimplot}, we assess the sum secrecy rate of the AN-aided algorithms in terms of $J$, i.e., the number of columns of ${\bf{\Phi}}$, for $N=16$ AP antennas, $K=2$ users, $M=4$ eavesdroppers, and $P=20$ dBm transmit power. 
Unlike the benchmarks, the proposed algorithm achieves their highest secrecy rates only with $J=1$.
This result indicates that the precoder of the proposed methods can well adapt to the AN for any dimension, thereby accomplishing high efficiency with only $J=1$ dimension.
In other words, the proposed algorithms require small $J$ and thus, the total complexity of Algorithm~\ref{alg:perfectCSIT} can reduce to $\CMcal{O} \left( \frac{1}{3} T K N^3 \right)$ as discussed in Section~\ref{subsec:Alg_comp}.

In Fig.~\ref{fig:convergence}, we evaluate the convergence results in terms of the approximated objective function $L(\bar{\bv})  = \log_2\lambda(\bar{\bv})$ in \eqref{eq:lagrangian}.
Here, JS-GPIP converges within $T=5$ iterations for $P \in \{0, 20, 40\}$ dBm transmit power and   $T=10$ iterations for  $-20$ dBm transmit power.
Therefore, with the small number of iteration $T$, the proposed algorithms reveal high potential in practical implementation.  
Overall, the proposed precoding and AN covariance design methods provide improvement in secrecy rate with low complexity, and wiretap channel covariance can be utilized and enough for achieving such an improvement. 
\begin{figure}[t] 
    \begin{center}
    \includegraphics[width=.6\linewidth]{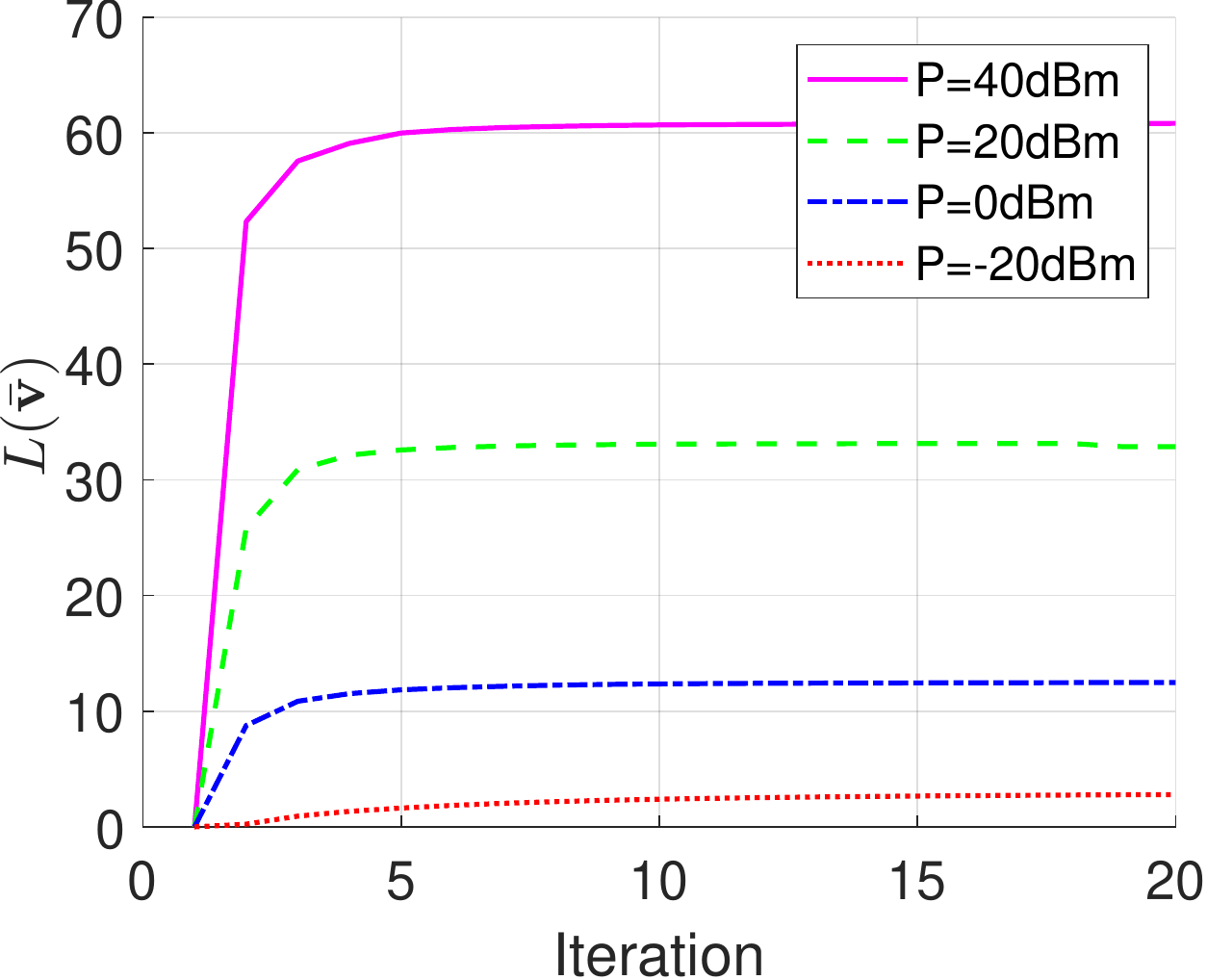}
    \end{center}
    \vspace{-1em}
    \caption{Convergence results in terms of $L(\bar \bv)$ for $N=8$ AP antennas, $K=4$ users, $M=4$ eavesdroppers, and $P=\{-20,0,20,40\}$ dBm transmit power.}
    \label{fig:convergence}
        \vspace{-1em}
\end{figure}

\section{Conclusion}
In this paper, we presented a novel precoding algorithm that maximizes the sum secrecy rate for MU-MIMO wiretap channels with multiple eavesdroppers. Our key approach was to reformulate the non-convex and non-smooth secrecy rate maximization problem into a tractable non-convex and smooth optimization problem. Then, we derived the first-order optimality condition for the reformulated problem. Leveraging this first-order condition, we proposed the secure precoding algorithms that jointly and simultaneously identify both the beams and AN covariance matrix to maximize the sum secrecy rate when perfect or imperfect channel knowledge is available at the legitimate transmitter. As a byproduct, we also proposed a secure precoding method with null-space projection-based AN design. One key observation was that by jointly and simultaneously optimizing the precoder, AN covariance, and the precoder-AN power allocation, the proposed method achieved a higher secrecy rate performance than the existing methods with the need for only a small AN subspace and with fast convergence.
Therefore, the proposed methods under the joint and simultaneous optimization framework can offer significantly improved security for future wireless applications.

\appendices
\section{Proof of Lemma 1} \label{proof:lem1}
The Lagrangian function of problem \eqref{eq:reform_problem_eve} is 
\begin{align}
    \label{eq:lagrangian}
    L(\bar {\bv}) =& \sum_{k = 1}^{K}  \left\{ \log_2  \left( \frac{{\bf \bar v }^{\sf H} {\bf{A}}_k \bar {\bv} }{{\bf \bar v }^{\sf H} {\bf{B}}_k \bar {\bv}} \right)  -  \alpha\log  \left(\sum_{m=1}^M \left( \frac{\bar{\bv}^{\sf H} {\bC}_{m,k} \bar {\bv} }{\bar{\bv}^{\sf H} {\bD}_{m,k} \bar {\bv} } \right)^{\beta}  \right) \right\}.
\end{align}
By the first-order KKT condition, a stationary point should satisfy
\begin{align} \label{eq:first_opt_cond}
\frac{\partial L(\bar {\bv})}{\partial \bar {\bv}^{\sf H}} = 0.
\end{align}
To find a condition to satisfy \eqref{eq:first_opt_cond}, we take the partial derivatives of $L(\bar {\bv})$ with respect to ${\bf \bar v }$ and set it to zero. 
For simplicity, we let 
\begin{align}
    L_1(\bar {\bv}) &=  \sum_{k = 1}^{K}  \log_2  \left( \frac{{\bf \bar v }^{\sf H} {\bf{A}}_k \bar {\bv} }{{\bf \bar v }^{\sf H} {\bf{B}}_k \bar {\bv}} \right)
    \text{ and } L_2(\bar {\bv}) = \sum_{k = 1}^{K}    \alpha\log  \left(\sum_{m=1}^M \left( \frac{\bar{\bv}^{\sf H} {\bC}_{m,k} \bar {\bv} }{\bar{\bv}^{\sf H} {\bD}_{m,k} \bar {\bv} } \right)^{\beta}  \right).
\end{align}
Then by using 
\begin{align}\label{eq:derivative_matrix}
        \frac{\partial \left( \frac{{\bf \bar v }^{\sf H} {\bA}_k \bar {\bv} }{{\bf \bar v }^{\sf H} {\bB}_k \bar {\bv}} \right)}{\partial \bar {\bv}^{\sf H}} = \left( \frac{{\bf \bar v }^{\sf H} {\bA}_k \bar {\bv} }{{\bf \bar v }^{\sf H} {\bB}_k \bar {\bv}} \right) \left[  \frac{{\bA}_k \bar {\bv} }{{\bf \bar v }^{\sf H} {\bA}_k \bar {\bv}} - \frac{{\bB}_k \bar {\bv} }{{\bf \bar v }^{\sf H} {\bB}_k \bar {\bv}} \right],
\end{align}
the partial derivative of $L_1(\bar {\bv})$ is obtained as
\begin{align}   
     \label{eq:dL1}
    \frac{\partial L_1 (\bar{\bv})}{\partial \bar {\bv}^{\sf H}} = \sum_{k=1}^K \frac{1}{\log2} \left( \frac{{\bA}_k \bar {\bv} }{{\bf \bar v }^{\sf H} {\bA}_k \bar {\bv}} - \frac{{\bB}_k \bar {\bv} }{{\bf \bar v }^{\sf H} {\bB}_k \bar {\bv}} \right).
\end{align}
Similarly, we calculate the partial derivative of $L_2(\bar {\bv})$ as
\begin{align}
    \label{eq:dL2}
    \frac{\partial L_2 (\bar{\bv})}{\partial \bar {\bv}^{\sf H}} = \sum_{k=1}^K \frac{\alpha \sum_{m} \beta \left( \frac{\bar{\bv}^{\sf H} {\bC}_{m,k} \bar {\bv} }{\bar{\bv}^{\sf H} {\bD}_{m,k} \bar {\bv} } \right)^{\beta} \left( \frac{{\bC}_{m,k} \bar {\bv} }{{\bf \bar v }^{\sf H} {\bC}_{m,k} \bar {\bv}} - \frac{{\bD}_{m,k} \bar {\bv} }{{\bf \bar v }^{\sf H} {\bD}_{m,k} \bar {\bv}} \right)}{- \sum_{m} \left( \frac{\bar{\bv}^{\sf H} {\bC}_{m,k} \bar {\bv} }{\bar{\bv}^{\sf H} {\bD}_{m,k} \bar {\bv} } \right)^{\beta}}.
\end{align}
Using \eqref{eq:dL1} and \eqref{eq:dL2} and rearrange the equations we obtain the following first order KKT condition:
\begin{align}
    \nonumber
    &\sum_{k=1}^{K}  \left(  \frac{1}{\log2} \frac{\bA_k}{ {\bf \bar v }^{\sf H} \bA_k \bar {\bv}} +  \frac{\alpha \sum_m \beta \left( \frac{\bar{\bv}^{\sf H} {\bf{C}}_{m,k} \bar {\bv} }{\bar{\bv}^{\sf H} {\bf{D}}_{m,k} \bar {\bv} }  \right)^{\beta} \left( \frac{\bD_{m,k}}{{\bf \bar v }^{\sf H} {\bD}_{m,k} \bar {\bv}} \right) }{ \sum_m  \left( \frac{\bar{\bv}^{\sf H} {\bf{C}}_{m,k} \bar {\bv} }{\bar{\bv}^{\sf H} {\bf{D}}_{m,k} \bar {\bv} }  \right)^{\beta} } \right) \bar {\bv}
    \\
    \label{eq:kkt_cond_rearrange}
    &= \sum_{k=1}^{K} \left( \frac{1}{\log2} \frac{\bB_k}{ {\bf \bar v }^{\sf H} {\bB}_k \bar {\bv}} + \frac{\alpha \sum_m \beta \left( \frac{\bar{\bv}^{\sf H} {\bf{C}}_{m,k} \bar {\bv} }{\bar{\bv}^{\sf H} {\bf{D}}_{m,k} \bar {\bv} }  \right)^{\beta} \left( \frac{\bC_{m,k}}{{\bf \bar v }^{\sf H} \bC_{m,k} \bar {\bv}} \right) }{ \sum_m  \left( \frac{\bar{\bv}^{\sf H} {\bf{C}}_{m,k} \bar {\bv} }{\bar{\bv}^{\sf H} {\bf{D}}_{m,k} \bar {\bv} }  \right)^{\beta} } \right) \bar {\bv}.
\end{align}
Finally, we further rearrange \eqref{eq:kkt_cond_rearrange} and derive 
\begin{align}
    \label{eq:KKT_final}
    {\bf{A}}_{\sf KKT}(\bar {\bv}) \bar {\bv} = \lambda(\bar {\bv}) {\bf{B}}_{\sf KKT} (\bar {\bv}) \bar {\bv}.
\end{align}
Since $\bB_{\sf KKT}$ is Hermitian, \eqref{eq:KKT_final} can be cast to \eqref{eq:kkt_lem} in Lemma~\ref{lem:main}.
This completes the proof.
\qed

\bibliographystyle{IEEEtran}
\bibliography{SecureAN}

\end{document}